\def\LC{\mathcal{L}}
\def\TC{\mathcal{T}}
\def\FC{\mathcal{F}}
\def\HC{\mathcal{H}}
\newcommand{\E}{\mathbf{E}}
\newcommand{\N}{\mathbf{N}}
\def\P{\mathbf{P}}
\def\Q{\mathbf{Q}}
\newcommand{\R}{\mathbf{R}}
\newcommand{\e}{\mathbf{e}}
\def\tr{\rm{tr}}
\def\arccot{\rm{arccot}}
\newcommand{\al}{\alpha}
\newcommand{\be}{\beta}
\newcommand{\pa}{\partial}
\newcommand{\ep}{\epsilon}
\newcommand{\ga}{\gamma}
\newtheorem{prop}{Proposition}[section]
\newtheorem{theorem}{Theorem}[section]
\newtheorem{corollary}{Corollary}
\newtheorem{remark}{Remark}
\newcommand{\la}{\lambda}
\newcommand{\si}{\sigma}
\newcommand{\om}{\omega}
\newcommand{\De}{\Delta}
\newcommand{\Om}{\Omega}
\begin{document}
\title{On the Mathematical Theory of Quantum Stochastic Filtering Equations for Mixed States
\thanks{The paper was supported by RSF research grant (project no. 24-11-00123)}
}

\author{Vassili N. Kolokoltsov\thanks{Faculty of Computation Mathematics and Cybernetics,
 Moscow State University, 119991 Moscow, Russia, Higher School of Economics, Russia,
 Professor Emeritus of the University of Warwick,
  Email: v.n.kolokoltsov@gmail.com}}
\maketitle

\begin{abstract}
Quantum filtering equations for mixed states were developed in 80th of the last 
century. Since then the problem of building a rigorous mathematical theory for these equations in the basic infinite-dimensional settings has been a challenging open mathematical problem. In a previous paper, the author developed the theory of these equations in the case of bounded coupling operators, including a new version that arises as the law of large numbers for interacting particles under continuous observation and thus leading to the theory of quantum mean field games. In this paper, the main body of these results is extended to the basic cases of unbounded coupling operators.   
\end{abstract}

{\bf Keywords:} quantum stochastic master equation, stochastic Lindblad equation,
quantum stochastic filtering, Belavkin's equations,
quantum interacting particle systems, singular SDEs in Banach spaces.

{\bf MSC classification:} 60H15, 60K35, 81P15, 81Q93, 93E11

\section{Introduction}

\subsection{Aims and scope}

The general theory of quantum non-demolition observation, filtering, and resulting
feedback control was built essentially by V.P. Belavkin in papers \cite{Bel87}, \cite{Bel88}, \cite{Bel92}, see review in \cite{BoutHanJamQuantFilt} and alternative simplified derivations in \cite{BelKol}, \cite{Pellegrini}, \cite{BarchBel}, \cite{Holevo91}, \cite{KolQuantFrac}, \cite{BarndLoub}, \cite{Loubenets}.
For the technical side of organizing feedback quantum control in real time
 see e.g. \cite{Armen02Adaptive}, \cite{Bushev06Adaptive} and \cite{WiMilburnBook}.
Equations of quantum filtering can also be looked at as stochastic master (or Lindblad) equations, see a detailed discussion in \cite{BarchBook}. We can refer also to \cite{Partha}
for various insightful links of these equations with other models.

The mathematics of the evolution of pure states given by the Belavkin equations (and representing some kind of stochastic nonlinear Schr\"odinger equation) is fairly well understood by now, though the issues related to the strong solutions of nonlinear equation were still open and we fix them bypassing here. The present paper is mainly devoted to the study of a more subtle case of the operator-valued evolution of mixed states. In the previous paper \cite{K} the author developed the theory of these equations under the assumption of boundedness of the coupling operator $L$ with a measurement device. Here these results are extended to the basic cases of unbounded coupling operators.

 Recently the author built the theory of the law of large numbers limit for continuously observed interacting quantum particle systems leading to quantum mean-field games, see \cite{KolQuantLLN}, \cite{KolQuantMFGCount} and \cite{KolQuantMFG}. These limits are described by certain nontrivial extensions of quantum stochastic master equations that can be looked at as infinite-dimensional operator-valued McKean-Vlasov diffusions. In \cite{K} the full theory of these new equations was built for bounded coupling operators. The result of the present paper allows one to extend them for unbounded $L$, but we do not touch this issue here, planning to consider it elsewhere.  On the other hand, the theory developed here, can be used to give a rigorous derivation of the quantum filtering equations from the limits of appropriately scaled sequences of instantaneous measurements, which were performed until now only for finite-dimensional situations, see e.g.  \cite{KolQuantFrac} and \cite{Pellegrini} and references therein. This issue will be also addressed elsewhere. 

\subsection{Main objects of analysis}

In this paper, the letters $H$ and $L=(L^1, \cdots, L^n)$ denote linear operators in a separable Hilbert space $\HC$, where $H$ is self-adjoint and is referred to as
the Hamiltonian. The vector-valued $L$ is closed and densely defined with densely defined adjoint $L^*$. It describes the coupling of a quantum system with a measurement device. 
We shall assume also that the operator $\sum_k (L^k)^*L^k$ is densely defined and closed 
(and consequently symmetric). 

We use the notations
\[
L_S=(L+L^*)/2=(L^{S1}, \cdots, L^{Sn}), \quad L_A=(L-L^*)/2i=(L^{A1}, \cdots, L^{An}).
\]
We write $\, Re \, z=z_R$ and $Im \, z=z_I$ for the real and imaginary parts of a complex number or a vector, 
write $Dom(A)$ for the domain of an operator $A$, use the brackets $[A,B]$ 
and $\{A,B\}$ to denote the commutator and anti-commutator of operators $A,B$. 
By $\|A\|$ we denote
the standard operator norm of a bounded operator $A$ in $\HC$. The norm of an operator $A$ from a Banach space $B_1$ to a Banach space $B_2$ will be denoted $\|A\|_{B_1\to B_2}$
and simply $\|A\|_B$ for operators $B\to B$. The letter $\E$ is used to denote the expectation, and we write $\E_{\P}$ to stress that the expectation is taken with respect to the measure $\P$.

\begin{remark}
Most of our results have an extension to the case of 
time dependent families $L(t)$, but we do not give details here.
\end{remark}

The quantum filtering equation  describing 
the stochastic evolution of pure states
(vectors in a Hilbert space $\HC$) under continuous measurement
of a diffusive type can be written in two versions:

(i) as the {\it linear Belavkin quantum filtering SDE (stochastic differential equation)} for a non-normalized state:
\begin{equation}
\label{eqqufiBlin}
d\chi(t) =-[iH\chi(t) +\frac12 \sum_{j=1}^n (L^j)^*L^j \chi(t) ]\,dt
+\sum_jL^j\chi(t) dY_j(t),
\end{equation}
where $\chi(t)\in \HC$ and $Y(t)=(Y_1, \cdots, Y_n)(t)$ is an $n$-dimensional Brownian motion (BM) referred to as the {\it output process};

(ii) as the {\it nonlinear Belavkin quantum filtering SDE}
 for the normalized state $\phi(t)$:
\[
d\phi(t)=\sum_j(L^j-(\phi(t), L^{Sj} \phi(t)))\phi(t) \, dB_j(t)
\]
\begin{equation}
\label{eqqufiBnonlin}
-[i(H-\sum_j(\phi(t), L^{Sj} \phi(t)) L^{Aj})
+\frac12\sum_{j=1}^n (L^j-(\phi(t), L^{Sj} \phi(t)))^*(L^j-(\phi(t), L^{Sj} \phi(t)))]\phi(t) \, dt,
\end{equation}
where $B(t)=(B_1, \cdots, B_n)(t)$ is an $n$-dimensional Brownian motion 
(BM) referred to as the {\it innovation process}.

Of course the rigorous form of these equations is the integral one. Say, equation \eqref{eqqufiBlin} with initial vector $\chi_0$ writes down as  
\begin{equation}
\label{eqqufiBlinint}
\chi(t)=\chi_0 +\int_0^t (-iH-\frac12 \sum_{j=1}^n (L^j)^*L^j) \chi(s)ds 
+\sum_{j=1}^n \int_0^t L^j \chi(s) \, dY_j(s).
\end{equation}

\begin{remark} In the literature one finds also a version of these equations with an infinite number of $L^j$. We decided here to work with finite $n$ to make our exposition more transparent (avoiding just additional technical complications). 
\end{remark}
    
Equation \eqref{eqqufiBlin} is clearly well posed in the case of bounded $H$ and $L$. Moreover, one checks by direct application of Ito's formula that if $\chi$ satisfies \eqref{eqqufiBlin}, then 
 \begin{equation}
\label{chisquare}
d\|\chi(t)\|^2=2\sum_j(\chi(t),L^{Sj} \chi(t))dY_j(t),
\end{equation}
and the normalized vectors
$\phi(t)=\chi(t)/\|\chi(t)\|$ satisfy \eqref{eqqufiBnonlin} with 
\begin{equation}
\label{eqdefinnov}
dB_j(t)=dY_j(t)-2(\phi(t), L^{Sj} \phi(t)) \, dt.
\end{equation}
Thus, by Girsanov's theorem, if $Y(t)$ is a BM on a certain stochastic basis 
$(\Om, \FC, \FC_t,\P)$, then $B(t)$ is a BM on the basis $(\Om, \FC,\FC_t,\Q)$ s.t.
 \begin{equation}
\label{chisqdensity}
\E_{\Q} \xi =\E_{\P}(\|\chi(t)\|^2 \xi)
\end{equation}
for $\FC_t$-measurable $\xi$.

If one agrees to understand all products of operator expressions
as appropriate inner products
(sum over available indices), for instance, writing $L^*L$ instead of $\sum_j (L^j)^* L^j$,
and $L \chi \, dY(t)$ instead of $\sum_j L^j\chi \, dY_j(t)$,
 one can write all equations above in a simpler form,
say the main equations \eqref{eqqufiBlin} and \eqref{eqqufiBnonlin} will look like
\begin{equation}
\label{eqqufiBlins}
d\chi(t) =-[iH\chi(t) +\frac12 L^*L \chi(t) ]\,dt+L \chi(t) dY(t),
\end{equation}
and, respectively,
\[
d\phi(t)=-[i(H-(\phi(t), L_S \phi(t)) L_A)
+\frac12 (L-(\phi(t), L_S \phi(t)))^*(L-(\phi(t), L_S \phi(t)))]\phi(t) \, dt
\]
\begin{equation}
\label{eqqufiBnonlins}
+ (L-(\phi(t), L_S \phi(t)))\phi(t) \, dB(t).
\end{equation}
We will mostly use this short way of writing having in mind more detailed versions above.

The derivation of \eqref{eqqufiBnonlins} from the linear equation, given above,
suggests that it is meant to describe evolutions of vectors with a unit norm. 
It is often convenient to include it in a more general class of norm-preserving evolutions, the simplest version being
\[
d\phi(t)=-[i(H-\langle L_S \rangle_{\phi(t)} L_A)
+\frac12 (L-\langle L_S \rangle_{\phi(t)})^*(L-\langle L_S \rangle_{\phi(t)})]\phi(t) \, dt
\]
\begin{equation}
\label{eqqufiBnonlinsn}
+ (L-\langle L_S \rangle_{\phi(t)})\phi(t) \, dB(t),
\end{equation}
where we introduced the (rather standard) notation for the value
of an operator $A$ in a pure state $\phi$:
\[
\langle A \rangle_{\phi}=\frac{(\phi, A \phi)}{(\phi, \phi)}.
\]
Clearly for $\phi(t)$ of unit norm solutions to equations \eqref{eqqufiBnonlins}
and \eqref{eqqufiBnonlinsn} coincide, but \eqref{eqqufiBnonlinsn} is explicitly norm-preserving for arbitrary $\phi(t)$, which is not the case for equation \eqref{eqqufiBnonlins}.

Equation \eqref{eqqufiBlins} simplifies essentially in the most important case of self-adjoint $L$, as it takes the form  

\begin{equation}
\label{eqqufiBnonlinsnsa}
d\phi(t)=-[iH
+\frac12 (L-\langle L \rangle_{\phi(t)})^2]\phi(t) \, dt
+ (L-\langle L \rangle_{\phi(t)})\phi(t) \, dB(t).
\end{equation}

It is also insightful to write down equation for $\chi$ 
in terms of the innovation process $B$:
\begin{equation}
\label{eqqufiBlinsB}
d\chi(t) =-[iH\chi(t) +\frac12 L^*L \chi(t) ]\,dt
+L \chi(t) \, (dB(t)+\langle L+L^*\rangle_{\chi(t)} \, dt).
\end{equation}

For unbounded $H,L$, the meaning of all these equations is of course not obvious.


Recall that the density matrix or density operator $\ga$ corresponding to a unit vector
$\chi\in \HC$ is defined as the orthogonal projection operator on $\chi$.
This operator is usually expressed either as the tensor product 
$\ga=\chi\otimes \bar \chi$ (with the usual identification of $H\otimes H$ 
with the space of Hilbert-Schmidt operators in $\HC$) or in the most common 
for physics bra-ket Dirac's notation as $\ga=|\chi\rangle \langle \chi|$.

As one checks by direct application of Ito's formula, (i) if $\chi(t)\in \HC$
satisfies \eqref{eqqufiBlins}, then the corresponding operator $\ga=\chi\otimes \bar \chi$
satisfies the {\it linear stochastic quantum master equation} or {\it linear Belavkin's quantum filtering equation for mixed states}
\begin{equation}
\label{Lindstoch}
d\ga(t)=-i[H,\ga(t)] \, dt +\LC_L \ga(t) \, dt +(L\ga(t)+\ga(t) L^*) dY(t),
\end{equation}
with
\[
\LC_L\ga =L\ga L^*-\frac12 L^*L\ga -\frac12 \ga L^*L
=L\ga L^*-\frac12 \{L^*L,\ga\};
\]
and (ii) if $\phi(t)$ satisfies \eqref{eqqufiBnonlin}, then the corresponding matrices
$\rho=\phi\otimes \bar \phi$ satisfies the {\it nonlinear stochastic quantum master equation}
or {\it nonlinear Belavkin's quantum filtering equation}
\begin{equation}
\label{Lindstochnorm1}
d\rho(t)=-i[H,\rho(t)]\, dt+\LC_L \rho (t)\, dt
+[L\rho(t)+\rho(t) L^*-\rho(t)\, {\tr} \, (L\rho(t)+\rho(t) L^*) ] dB(t).
\end{equation}

The well posedness of these master equations (not necessarily for solutions of form $\chi \otimes \bar \chi$) and of their nonlinear extensions for interacting particle systems were proved in \cite{K} in case of bounded $L$. The objective of the present paper is to extend the main body of these results to (practically important) cases of unbounded $L$. Notice that the arguments of the present paper use the ideas from \cite{K}, but the results are essentially independent, because the assumption of boundedness of $L$ allows for certain specific estimates that are absent in the unbounded case. 

As an important property of the stochastic equations above, let us observe that taking expectation from both sides of \eqref{Lindstoch} shows (at least heuristically in infinite dimensional case) that $\hat \ga (t)=\E \ga (t)$ satisfies the {\it quantum master equation}
or {\it Lindblad equation}
\begin{equation}
\label{eqquantmas}
\frac{d}{dt} \hat \ga(t)=-i[H,\hat \ga(t)]  +\LC_L \hat \ga(t).
\end{equation}
Solutions to these equations generating quantum dynamic semigroups were attentively studied in the literature, see \cite{Mora13} and numerous references therein.  

\subsection{Content}

The paper is organized as follows. 

In Section \ref{secpureeq}  we first recall some  known facts
on the quantum filtering SDEs (stochastic differential equations) for pure states, often providing  simplified proofs. We discuss two approaches, a more abstract one and another based on some explicit solutions, the latter giving more insight in the behavior of the stochastic dynamics. Finally we prove our first result concerning the well-posedness of the nonlinear Belavkin's equation for pure states in the strong (probabilistic) sense, while previously only probabilistically weak solutions were developed.  

In Section \ref{seclineq} we build the theory of the linear quantum filtering SDE \eqref{Lindstoch}. We follow the approach from \cite{K} considering these equations in the Hilbert space of Hilbert-Schmidt operators. This approach leads necessarily to SDEs with singular coefficients, but allows one to avoid working with the inconvenient (from the point of view of stochastic calculus) Banach space of trace-class operators. By passing we note that the well known result on the well-posedness and conservativity of dynamic quantum semigroups given by equation \eqref{eqquantmas}, is a direct consequence of our present theory on the corresponding quantum stochastic equations.    

In Section \ref{secnormeq}, armed with the results of the previous sections, we address the most nontrivial object of this paper: nonlinear quantum filtering equations 
for mixed states  \eqref{Lindstochnorm1}, and obtain the full well-posedness result for these equations, first in the weak (probabilistically) and then in the strong sense.

\section{Dynamics of pure states}
\label{secpureeq}

\subsection{Linear equation: general results}

In this section we collect some mostly known facts about
the linear quantum filtering equations for pure states. We shall present the well posedness of the linear Belavkin equation with unbounded coefficients in two settings: (1) the most general one due to Mora-Rebolledo based on the introduction of the so called control operator and with ideas going back to Holevo and Fagnola-Chebotarev) and (2) a more specific one arising from explicit Green functions of the standard stochastic Schr\"odinger equation describing continuously observed momentum or position.    

Starting with the first approach we introduce the following 
Hypothesis due to Mora-Rebolledo, though in a slightly reduced (and more intuitive) form,
which appears in all applications.

{\it Hypothesis MR}. Suppose there exists a (strictly) positive self-adjoint linear operator $C : \HC \to \HC$ with discrete spectrum $0\le \la_1 \le \la_2 \le \cdots $, the corresponding orthonormal basis of eigenvectors $\{e_m\}$, subspaces $\HC_m$ generated by the first $m$ eigenvectors and orthogonal projections $P_m$ on $\HC_m$. On the domain $Dom(C)$ of $C$ one can define the corresponding $C$-norm 
$\|x\|_C^2=(\|x\|^2 +\|Cx\|^2)^{1/2}$. It is then assumed that $Dom(C) \subset Dom(H) \cap Dom(L) \cap Dom (L^*L)$  and therefore (according to the closed graph theorem) there exists a constant $K$ such that
\begin{equation}
\label{MR0}
 \|H x\|^2\le K\|x\|^2_C, \quad 
 \| L^*L x\|^2 \le K\|x\|_C^2,   
\end{equation}
for all $x\in Dom(C)$, and consequently
\begin{equation}
\label{MR0a}
\|L x\|^2 \le \sqrt K \|x\| \, \|x\|_C.
\end{equation}
Finally, the following generalized {\it dissipativity with respect to} $C$ is assumed:  either
\begin{equation}
\label{MR1}    
-2Re \, (Cx,i CP_m Hx)-Re \, (Cx,CP_m L^*L x)
+ \|CP_mLx\|^2\le \al (\|x\|_C^2+\be),
\end{equation}
or (this second version is noted bypassing as Remark 21 in \cite{MoraRebolin})
\begin{equation}
\label{MR2}    
-2Re \, (Cx,iCP_m H x)-Re \, (Cx,CP_m L^*P_mL x)
+ \|CP_mLx\|^2\le \al (\|x\|_C^2+\be),
\end{equation}
hold for any $m$, $x\in \HC_m$ and some constants $\al, \be>0$.

Let us stress again that we use our reduced notations with the summation tacitly assumed, 
so that, e.g., $L^*L$ means $\sum_k (L^k)^*L^k$ and $\|Lx\|^2=\sum_k \|L^kx\|^2$.

Conditions \eqref{MR1} and \eqref{MR2} are similar and often hold simultaneously. However, \eqref{MR1} is usually easier to check, because 
it is equivalent (under all other conditions) to the inequality
\begin{equation}
\label{MR1a}    
-2Re \, (Cx,iC Hx)-Re \, (Cx,C L^*L x)
+ \|CLx\|^2\le \al (\|x\|_C^2+\be),
\end{equation}
holding for all $x\in \cup_m \HC_m$ (not involving any projections). In fact,
\eqref{MR1a} follows from \eqref{MR1} passing to the limit as $m\to \infty$, and 
\eqref{MR1a} implies \eqref{MR1}, because 
\[
-2Re \, (Cx,iCP_m Hx)-Re \, (Cx,CP_m L^*L x)
\]
\[
=-2Re \, (Cx,iC H x)-Re \, (Cx,C L^*L x), \quad x\in \HC_m.
\]
On the other hand,  \eqref{MR2} is more convenient for working with nonlinear equations, 
as will be seen later on.    

Given BM $Y(t)$ on a filtered probability space $(\Om, \FC, \FC_t, \P)$,
a $C$-{\it strong solution} of equation \eqref{eqqufiBlinint} on a time-interval $[0,T]$ is defined as an adapted continuous $\HC$-valued process $\chi(t)$ s.t. (i) its norm is non-increasing, that is, $\E \|\chi(t)\|^2\le \|\chi_0\|^2$, (ii) $\chi(t)\in Dom(C)$ a.s. and
\[
\sup_{t\in [0,T]} \E \| C \chi(t)\|^2 <\infty,
\]
(iii) equation \eqref{eqqufiBlinint} holds.

\begin{remark}
To be more precise, one introduces the mapping $\pi_C : \HC \to \HC$ s.t. $\pi_C(x)$ equals $x$ or $0$ for $x\in Dom(C)$ and otherwise, respectively, and one writes $\pi_C(\chi(t))$ instead of $\chi(t)$ everywhere in the equation. We shall assume that this is done whenever necessary, but we shall not explicitly use this $\pi_C$ in our notation.        
\end{remark}

The following result is the combination of Theorems 4 and 17 from \cite{MoraRebolin}
(given here under some simplifying assumption, e.g. assuming only a finite number of $L_j$).

\begin{theorem} 
\label{thMorRebLin}
(Mora-Rebolledo) Under Hypothesis MR, for any $\chi_0\in Dom(C)$ there exists a unique $C$-strong solution $\chi(t)$ to equation \eqref{eqqufiBlinint}, and moreover, this solution satisfies the estimate
\begin{equation}
\label{eqthMor1}
\E \|C\chi(t)\|^2 \le e^{\al t} [\|C\chi_0\|^2+\al t(\|\chi_0\|^2+\be)],
\end{equation}
and is conservative in the sense that $\E \|\chi(t)\|^2=\|\chi_0\|^2$, 
so that $\|\chi(t)\|^2$ is a positive martingale. 
\end{theorem}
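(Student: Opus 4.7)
The plan is a Galerkin-truncation scheme combined with energy estimates driven by Hypothesis MR. First I would introduce the finite-dimensional approximation on $\HC_m$,
\begin{equation*}
d\chi_m(t) = -\bigl[iP_m H + \tfrac12 P_m L^* L\bigr]\chi_m(t)\,dt + P_m L \chi_m(t)\,dY(t),
\end{equation*}
with $\chi_m(0)=P_m \chi_0$. Since $\HC_m \subset Dom(C) \subset Dom(H) \cap Dom(L^*L)$, all projected coefficients are well defined and bounded as maps $\HC_m \to \HC_m$, so classical finite-dimensional SDE theory provides a unique strong solution $\chi_m$ taking values in $\HC_m$.

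The central a priori estimate comes from applying Itô to $\|C\chi_m(t)\|^2$: the drift contribution together with the Itô correction from the diffusion part assemble into precisely the left-hand side of \eqref{MR1}, so dissipativity plus Gronwall yield \eqref{eqthMor1} uniformly in $m$. The auxiliary $\|\chi_m\|^2$-bound required to close this estimate follows from a second application of Itô, which reproduces \eqref{chisquare} verbatim for $\chi_m$; hence $\|\chi_m(t)\|^2$ is a positive local martingale, and uniform integrability inherited from the $C$-estimate upgrades it to a true martingale with constant mean $\|P_m\chi_0\|^2 \le \|\chi_0\|^2$.

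Passing to the limit $m\to\infty$, bounds \eqref{MR0}--\eqref{MR0a} translate the $C$-estimate into uniform $L^2(\Om)$ control of $H\chi_m$, $L\chi_m$ and $L^*L\chi_m$, and Burkholder--Davis--Gundy together with the Kolmogorov continuity criterion then yield tightness of $\{\chi_m\}$ in $C([0,T],\HC)$. Extracting a limit $\chi(t)$, I would identify it as a $C$-strong solution by testing the integral equation against fixed vectors in $\cup_k \HC_k$, where all coefficients converge strongly by density in the $C$-norm; that $\chi(t)\in Dom(C)$ with the claimed moment bound is secured by weak compactness of $\{C\chi_m(t)\}$ in $\HC$ together with closedness of $C$ (Fatou applied to the uniform $C$-norm estimate). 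Pathwise uniqueness follows by applying Itô to $\|\chi(t)-\tilde\chi(t)\|^2$ for two $C$-strong solutions and invoking Gronwall via \eqref{MR1a} applied to the linear difference; conservativity of $\|\chi(t)\|^2$ is then inherited from the Galerkin level by uniform integrability.

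The technically most delicate point I anticipate is ensuring that the limit lands in $Dom(C)$ for \emph{every} $t$ (not merely almost every $t$) and that the unbounded term $L^*L\chi$ can be identified as an honest pointwise limit of $P_m L^*L\chi_m$ rather than merely as a weak limit. This requires exploiting the closedness of $L^*L$ relative to $C$ together with joint compactness of $(\chi_m(t), C\chi_m(t))$, and is precisely the place where the projection-respecting form of Hypothesis MR does essential work; without dissipativity at each truncated level, one would only obtain weak convergence of the unbounded quantities and lose the integral equation in the limit.
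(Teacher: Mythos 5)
Your Galerkin scheme and the uniform bound \eqref{eqthMor2} obtained from Hypothesis MR via It\^o and Gronwall coincide with the first half of the paper's argument. But from that point on you follow the compactness route of the original Mora--Rebolledo proof (tightness, extraction of a weak limit, identification by testing against $\cup_k\HC_k$), and this is precisely where your sketch has genuine gaps. First, conservativity: you claim $\E\|\chi(t)\|^2=\|\chi_0\|^2$ is ``inherited from the Galerkin level by uniform integrability inherited from the $C$-estimate.'' Uniform integrability of the family $\{\|\chi_m(t)\|^2\}_m$ does \emph{not} follow from the uniform bound on $\E\|C\chi_m(t)\|^2$; it would require a uniform bound on a higher moment such as $\E\|\chi_m(t)\|^4$, and the paper's Proposition \ref{propsmoo} shows that exactly this quantity can be infinite for the basic position-measurement example ($\E\|U_t\|^p<\infty$ only for $p<2$). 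With only weak (or distributional) convergence you get $\E\|\chi(t)\|^2\le\|\chi_0\|^2$ by lower semicontinuity, not equality -- this is why the published Mora--Rebolledo proof needs the elaborate stopping arguments going back to Holevo that the paper's remark alludes to. Second, your uniqueness step ``It\^o on $\|\chi-\tilde\chi\|^2$ plus Gronwall via \eqref{MR1a}'' overlooks that the drift and It\^o-correction terms cancel exactly, leaving $\|\Delta(t)\|^2$ a positive \emph{local} martingale whose expectation cannot be handled by Gronwall directly; one must stop at the level sets of $\|\Delta\|$ and pass to the limit, as the paper is careful to point out. Third, the identification of $L^*L\chi$ in the limit equation, which you yourself flag as the delicate point, is deferred rather than resolved in your write-up.

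The paper avoids all three difficulties by a different and more elementary device: under the additional Hypothesis A (finite band structure $L,H:\HC_m\to\HC_{m+l}$ plus $Dom(\sqrt C)\subset Dom(H)$) it proves the quantitative estimate $\|(A-A_m)x\|\le \sqrt{2R}\,\|x\|_C/\sqrt{\la_{m-l+1}}$, subtracts the equations for $\chi_k$ and $\chi_m$, observes that all terms not involving $L_k-L_m$ or $H_k-H_m$ cancel, and concludes that $\{\chi_m\}$ is Cauchy in mean square with an explicit rate. Strong $L^2$ convergence then makes conservativity, the moment bound \eqref{eqthMor1}, and the passage to the limit in the integral equation immediate. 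If you want to retain your compactness-based route without Hypothesis A you would have to supply the missing stopping-time machinery for conservativity and a genuine argument for identifying the unbounded drift; otherwise the direct Cauchy-sequence argument is both shorter and stronger (it gives rates of convergence).
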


\begin{remark}
 Theorem  \ref{thMorRebLin} extends several previous contributions by other authors (see e.g.  \cite{BarchHol}, \cite{Holevo96} and references therein), some of them being inspired by the works on the conservativity of quantum dynamic semigroups from \cite{ChebFagn} and \cite{ChebQuez}.
For other classes  of stochastic Schr\"odinger equation we can refer to \cite{BarbRock16} and references therein.  
\end{remark}

Paper \cite{MoraRebolin} contains a detailed non-trivial and elegant proof. We shall sketch here an essentially simplified argument assuming an additional condition that usually holds in examples.

{\it Hypothesis A}: (a) There exists $l\in \N$ such that $L:\HC_m\to \HC_{m+l}$, 
$H:\HC_m\to \HC_{m+l}$ for all $m$ and (b)   $Dom(\sqrt C) \subset Dom(H)$, and therefore (according to the closed graph theorem) there exists a constant $K'$ such that
\begin{equation}
\label{MR0my}
 \|H x\|^2\le K'\|x\|^2_{\sqrt C} \le 2K'\|x\| \, \|x\|_C,   
\end{equation}

\begin{proof} 
 First of all, uniqueness of solution is more or less straightforward under Hypothesis MR:
 one writes down the equation for the difference $\De(t)$ of two solutions and observes that $\|\De(t)\|^2$ is a positive local martingale (here is an important detail to note: it is not seen apriori that it is a martingale), then one introduces the stopping time $\tau_n$ when $\|\De(t)\|$ reaches the level $n$ and concludes that $\De_{t\wedge \tau_n}=0$ for all $n$, and then finally that $\De(t)=0$.
 
 The main point is really the existence, which is proved via approximations. Namely, under \eqref{MR1}, one considers the solutions to the equation
\begin{equation}
\label{eqMRappr1}
\chi_m(t)=P_m\chi_0 +\int_0^t P_m  (-iH-\frac12  L^* L) \chi_m(s)ds 
+ \int_0^t P_m L \chi_m(s) \, dY(s),
\end{equation}
and, under \eqref{MR2}, one considers the solutions to the equation
\begin{equation}
\label{eqMRappr2}
\chi_m(t)=P_m\chi_0 +\int_0^t P_m (-iH-\frac12  L^*P_m L) \chi_m(s)ds 
+ \int_0^t P_m L \chi_m(s) \, dY(s),
\end{equation}
in $\HC_m$. Everything is well-posed in this finite-dimensional setting. Moreover, 
writing down the equations for $C\chi_m$ (say, under \eqref{eqMRappr2}) one obtains 
\[
d \, \E \, \|C\chi_m\|^2 =2 \, Re\, (CP_m (-iH-\frac12 L^*P_m L)\chi_m, CP_m\chi_m) \, dt
+(CP_mL\chi_m, CP_m L\chi_m) \,dt.
\]
By Hypothesis MR and Gronwall's Lemma, it follows  
that $\chi_m$ satisfies \eqref{eqthMor1} for all $m$: 
\begin{equation}
\label{eqthMor2}
\E \|C\chi_m(t)\|^2 \le e^{\al t} [\|C\chi_0\|^2+\al t(\|\chi_0\|^2+\be)].
\end{equation}

So the main point in the proof is to show that $\chi_m$ (or its subsequence) converges (in some sense), as $m\to \infty$, and that the limit is the solution required.

\begin{remark}
The proof of \cite{MoraRebolin} goes as follows. First it is shown that there exists a weakly converging subsequence $\chi_{m_k}$ (in the Hilbert space of square integrable $\HC$-valued   random variables), then by accurate consideration of duality it is shown that the limit satisfies equation  \eqref{eqqufiBlinint}, and then rather elaborate stopping arguments (which are improved versions of the arguments from \cite{Holevo96}) allows one to show that this solution is conservative, and finally one concludes that subsequence $\chi_{m_k}$ converges strongly. We shall prove here directly the strong convergence of $\chi_m$, even including the rates of convergence, though assuming additionally Hypothesis A.
\end{remark}

We shall work with approximations \eqref{eqMRappr2}. This equation, considered in space $\HC_m$, coincides with the basic equation     
\eqref{eqqufiBlinint}, if one chooses as operators $H$ and $L$ their finite-dimensional approximations $H_m=P_mHP_m$ and $L_m=P_mLP_m$, respectively. As was mentioned, finite-dimensional equations of this kind are well understood and it is known that they are conservative: $\E\|\chi_m(t)\|^2=\|\chi_0\|^2$. In fact, to see that this is the case, one writes down the equation for $\|\chi\|^2$ and observes that it is a martingale.

Subtracting equations for $m$ and $k$ we obtain (omitting argument $t$ for brevity)
\[
d(\chi_k-\chi_m)=-[i(H_k-H_m) +\frac12  (L_k^*L_k-L_m^*L_m)]\chi_k\, dt
+(L_k-L_m)\chi_k \, dY(t) 
\]
\[
-(i H_m +\frac12 L_m^*L_m)(\chi_k-\chi_m)\, dt
+L_m (\chi_k-\chi_m) \, dY(t). 
\]
And therefore
\[
d \, \E \, (\chi_k-\chi_m)^2
=-2 \, \E \, (\chi_k-\chi_m, i(H_k-H_m) \chi_k)_R \, dt 
\]
\[
+\E \, (\chi_k-\chi_m, (L_k^*L_k-L_m^*L_m)\chi_k)_R \, dt
+\E \, \|(L_k-L_m)\chi_k\|^2 \, dt 
\]
\begin{equation}
\label{eqthMor3}
+2\, \E \,  ((L_k-L_m) \chi_k, L_m (\chi_k-\chi_m))_R \, dt,
\end{equation}
because the terms not containing differences $L_k-L_m$ or $H_k-H_m$ nicely cancel out.

To estimate the right hand side we need the following observation. If $C$ is as assumed in Hypothesis MR and $A$ is an operator such that $A:\HC_m\to \HC_{m+l}$ for all $m$ and $\|Ax\|^2\le R \|x\|\, \|x\|_C$ for $x\in Dom(C)$ and a constant $R$, then 
\begin{equation}
\label{eqthMor4}
\|(A_k-A_m) x\|\le  \frac{\sqrt {2R}}{\sqrt {\la_{m-l+1}}}  \|x\|_C, 
\quad
\|(A-A_m) x\|\le  \frac{\sqrt {2R}}{\sqrt {\la_{m-l+1}}}  \|x\|_C,
\end{equation}
for $k>m>l$,  where $A_q=P_q AP_q$. In fact, say, dealing with the second estimate,
\[
\|(A-A_m)x\|^2 =\|(A-A_m)(1-P_{m-l})x\|^2 \le 2R \|x\|_C \, \|(1-P_{m-l})x\|
\le \frac{2R}{\la_{m-l+1}}  \|x\|_C^2.
\]

Applying this result to the operators $L$ and $H$ that satisfy 
\eqref{eqthMor4} with $R=\max (K',\sqrt K)$,  writing 
\[
L_k^*L_k-L_m^*L_m=(L_k^*-L_m^*)L_k+L_m^*(L_k-L_m),  
\]
and assuming $k>m$ for definiteness, 
we derive from \eqref{eqthMor3} (taking into account \eqref{eqthMor2}) that, for $m$ so large that $\la_{m-l+1}>1$, 
\[
\E \|\chi_k(t)-\chi_m(t)\|^2\le \|(P_k-P_m)\chi_0\|^2 
+\frac{R}{\sqrt {\la_{m-l+1}}}  e^{\al t} [\|\chi_0\|^2_C+\al t(\|\chi_0\|^2+\be)].
\]
Consequently, the sequence of continuous $\HC$-valued processes $\chi_m(t)$ is Cauchy and hence it converges to some continuous process $\chi(t)$. Since all $\chi_m(t)$ are conservative, it follows that $\chi(t)$ is conservative implying that $\|\chi(t)\|^2$ is a positive martingale. Uniform estimate \eqref{eqthMor2} implies the corresponding estimate \eqref{eqthMor1} for the limiting curve $\chi(t)$. In fact, it implies that 
$CP_k\chi_m(t)\to CP_k \chi(t)$, as $m\to \infty$ and any $k$, and hence $\|P_kC\chi(t)\|$ satisfies \eqref{eqthMor1}, and thus $\|C\chi(t)\|$ has the same bound.

It remains to prove that $\chi(t)$ satisfies equation  \eqref{eqqufiBlinint}. To this end we rewrite \eqref{eqMRappr2} 
(with our usual reduced way of writing with 
the appropriate summation assumed) as
\begin{equation}
\label{eqMRappr2a}
\chi_m(t)=P_m\chi_0 +\int_0^t (-iH-\frac12 L^* L) \chi_m(s)\, ds 
+ \int_0^t L \chi_m(s) \, dY(s)+I_m(t),
\end{equation}
with 
\[
I_m(t)=\int_0^t [-i(H_m-H)-\frac12 (L_m^*L_m-L^* L)] \chi_m(s)\,ds 
+ \int_0^t (L_m-L) \chi_m(s) \, dY(s).
\]
Then $I_m(t)$ tends to zero in the mean square sense, by \eqref{eqthMor4}. 
Thus passing to the limit in \eqref{eqMRappr2a} we get \eqref{eqqufiBlinint}.
\end{proof}

Paper \cite{MoraRebolin} presents two classes of examples satisfying assumptions of Theorem 
 \ref{thMorRebLin}. We shall give an essentially extended version of their first example supplying also more direct proofs of all conditions required.

 To this end we develop an easy verified criterion 
 for the condition of $C$-dissipativity.

\begin{prop}
\label{critdiss}
Suppose the estimates 
\begin{equation}
\label{eqcritdis}
\|[C,H]x\|\le \frac{\al}{4}\|x\|_C, \quad 
\sum_k |([L^k,C^2]\}x, L^kx)|\|\le \frac{\al}{2} \|x\|_C^2
\end{equation}
hold. Then both \eqref{MR1a} and \eqref{MR2} hold with $\be=0$.
\end{prop}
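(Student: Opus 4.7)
The plan is to reduce the left-hand sides of both \eqref{MR1a} and \eqref{MR2} to expressions in which the quantities $\|[C,H]x\|$ and $([L^k,C^2]x,L^kx)$ appear directly, so that hypotheses \eqref{eqcritdis} can be applied in a single step. The driving identities will be the Leibniz rules $CH=HC+[C,H]$ and $L^kC^2=C^2L^k+[L^k,C^2]$, together with the fact that $H$ is self-adjoint (so $(Cx,HCx)$ is real).

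First I would handle the Hamiltonian term for \eqref{MR1a}. Writing $-2\,Re(Cx,iCHx)=-2\,Im(Cx,CHx)$ and substituting $CHx=HCx+[C,H]x$ eliminates the self-adjoint piece, leaving $-2\,Im(Cx,[C,H]x)$, which Cauchy--Schwarz bounds by $2\|Cx\|\,\|[C,H]x\|\le 2\|x\|_C\cdot(\al/4)\|x\|_C=(\al/2)\|x\|_C^2$, using $\|Cx\|\le\|x\|_C$ and the first hypothesis. For the $L$-block, I compute $\|CLx\|^2=\sum_k(L^kx,C^2L^kx)$ and expand
\[
Re\,(Cx,CL^*Lx)=Re\sum_k(L^kC^2x,L^kx)=\sum_k\|CL^kx\|^2+Re\sum_k([L^k,C^2]x,L^kx),
\]
so that the sum $-Re(Cx,CL^*Lx)+\|CLx\|^2$ collapses to $-Re\sum_k([L^k,C^2]x,L^kx)$, which the second hypothesis bounds by $(\al/2)\|x\|_C^2$. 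Adding the two contributions yields \eqref{MR1a} with $\be=0$.

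For \eqref{MR2} the argument is parallel, and I would exploit two facts: $C$ commutes with every $P_m$ (both are diagonal in $\{e_m\}$), and $x\in\HC_m$ forces $P_mx=x$ and $Cx\in\HC_m$. The first reduction gives $(Cx,CP_mHx)=(Cx,CHx)$, so the Hamiltonian term is identical to that of \eqref{MR1a}. For the $L$-block one treats $P_mL^k$ as a new coupling operator: since $P_m$ and $C^2$ commute, $[P_mL^k,C^2]=P_m[L^k,C^2]$, and the same Leibniz manipulation as above yields
\[
-Re\,(Cx,CP_mL^*P_mLx)+\|CP_mLx\|^2=-Re\sum_k([L^k,C^2]x,P_mL^kx).
\]
Since $\|P_mL^kx\|\le\|L^kx\|$, the second hypothesis controls this (in applications \eqref{eqcritdis} is verified via a Cauchy--Schwarz splitting, which passes through the projection automatically).

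The main subtlety I anticipate is the bookkeeping that allows the cross-term $(C^2(1-P_m)L^kx,P_mL^kx)$ to disappear in the derivation for \eqref{MR2}: this is what makes $-Re(Cx,CP_mL^*P_mLx)+\|CP_mLx\|^2$ collapse cleanly. It works because $\HC_m$ is $C$-invariant, so $C^2P_mL^kx\in\HC_m$, while $(1-P_m)L^kx\in\HC_m^\perp$; this orthogonality kills precisely the off-diagonal remainder that would otherwise obstruct the cancellation. Once that point is in place the rest is algebra, and the two bounds combine to give $\al(\|x\|_C^2+\be)$ with $\be=0$ in both cases.
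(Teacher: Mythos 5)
Your proposal is correct and follows essentially the same route as the paper's proof: the Hamiltonian term is reduced to the commutator $[C,H]$ by cancelling the real quantity $(Cx,HCx)$, and the identity $L^kC^2=C^2L^k+[L^k,C^2]$ collapses $-Re\,(Cx,CL^*Lx)+\|CLx\|^2$ to $-Re\sum_k([L^k,C^2]x,L^kx)$, with the projected case handled via $[C,P_m]=0$ and the $C$-invariance of $\HC_m$. You merely make explicit the Cauchy--Schwarz step and the handling of $P_mL^kx$ versus $L^kx$ that the paper leaves implicit.
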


\begin{proof}
We have that 
\[
-2 Re \, (Cx,C(iH)x)-Re \, (Cx,C L^*L x)+ \|CL x\|^2 
\]
\[
=-2 \, Re \, (Cx,[C,iH]x)-Re \, (C^2Lx,L x)
-Re \, ([L,C^2]x,L x) + \|CL x\|^2 
\]
\[
=-2 \, Re \, (Cx,[C,iH]x)- Re \,  ([L,C^2]x,Lx). 
\]
This implies \eqref{MR1a}. Similarly, for $x\in \HC_m$, we have 
\[
-2 Re \, (Cx,CP_m(iH)x)-Re \, (Cx,C P_m L^*P_mL x)+\|CP_mLx\|^2 
\]
\[
=-2 \, Re \, (P_mCx,[C,iH]x)
-Re \, (P_m[L,C^2]x,P_mL x)  
\]
implying \eqref{MR2}.
\end{proof}

We can now apply this criterion to the continuous 
observation of momentum and position of a standard quantum system, where  
\begin{equation}
\label{eqgenHam}
H=\left(\frac{h}{i}\frac{\pa}{\pa x}-A(x)\right)^2+V(x)
\end{equation}
is the standard Hamiltonian of quantum mechanics (here $V(x)$ and $A(x)$ represent scalar and magnetic potentials) in $L^2(\R^d)$ and $L$ is either a position or a momentum operator, or their linear combination with constant coefficients:
\[
L=a \hat x+b \hat p, \quad \hat p= -i \frac{\pa}{\pa x},
\]
with real constants $a,b$, where $\hat x$ is the operator of multiplication by $x$.
As a direct consequence of Proposition \ref{critdiss} we can conclude the following. 

\begin{prop}
\label{critdissexam}
The conditions 
of Theorem  \ref{thMorRebLin} hold in this situation whenever $V$ and $A$ are regular enough (for instance, they are bounded with bounded first and second order derivatives), where $C$ can be chosen either as the Hamiltonian of quantum oscillator
\[
C=\hat x^2+\hat p^2,
\]
or any its power $C^q$ with any natural $q$.  Choosing $q\ge 2$ we can ensure the validity of the additional Hypothesis A with $l=1$.
\end{prop}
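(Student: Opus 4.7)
The plan is to verify the spectral and domain requirements of Hypothesis MR, then to reduce the dissipativity conditions to the commutator criterion of Proposition \ref{critdiss}, and finally to handle Hypothesis A separately.

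First I would record the well-known spectral data of the harmonic oscillator: on $L^2(\R^d)$, the operator $C=\hat x^2+\hat p^2$ is essentially self-adjoint on the Schwartz space, strictly positive with discrete spectrum $\{2|n|+d:n\in \Z_{\geq 0}^d\}$, and has an orthonormal eigenbasis of Hermite functions. The powers $C^q$ share the same eigenbasis (so the subspaces $\HC_m$ are independent of $q$), have eigenvalues $(2|n|+d)^q$, and $Dom(C^q)$ is the weighted Hermite--Sobolev space of order $2q$. Boundedness of $V,A,A^2,A'$ makes $H=(\hat p-A)^2+V=\hat p^2-2A\hat p+iA'+A^2+V$ a second-order perturbation of the free oscillator, so $Dom(C^q)\subset Dom(H)\cap Dom(L)\cap Dom(L^*L)$ for every $q\geq 1$, and the norm bounds \eqref{MR0}--\eqref{MR0a} follow from the closed graph theorem.

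Next I would apply Proposition \ref{critdiss} with $C$ replaced by $C^q$. The canonical commutation $[\hat x,\hat p]=i$ gives $[\hat x^2,\hat p^2]=2i\{\hat x,\hat p\}$, which is dominated in operator norm by $C$, together with $[\hat p^2,V]=-i(\hat p V'+V'\hat p)$, bounded by $\|V'\|_\infty\,\|\hat p x\|+\|V''\|_\infty\,\|x\|\leq\text{const}\cdot\|x\|_C$, and analogous identities for the magnetic terms. Propagating these estimates to $C^q$ via the Leibniz-type expansion $[H,C^q]=\sum_{j=0}^{q-1}C^j[H,C]C^{q-1-j}$ yields $\|[C^q,H]x\|\leq\text{const}\cdot\|x\|_{C^q}$. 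For the second estimate in \eqref{eqcritdis}, $[L,C]$ is again first-order in $\hat x,\hat p$ (since $C$ is quadratic and $L$ linear), and the quadratic form $([L,C^{2q}]x,Lx)$ is most transparently evaluated in the ladder representation $a=(\hat x+i\hat p)/\sqrt 2$, $C=2a^*a+d$: expanding in $a,a^*$ reduces it to a real quadratic form dominated by $\|x\|_{C^q}^2$. Proposition \ref{critdiss} then delivers \eqref{MR1a} and \eqref{MR2}, completing Hypothesis MR.

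Finally I would address Hypothesis A for $q\geq 2$. Condition (b) is immediate because $Dom(\sqrt{C^q})=Dom(C^{q/2})\subset Dom(C)\subset Dom(H)$ once $q\geq 2$. Condition (a) holds trivially for $L$ with $l=1$, since $\hat x$ and $\hat p$ each shift the Hermite index by $\pm 1$. The main technical obstacle I foresee is the band structure of $H$: its kinetic part $\hat p^2$ already shifts Hermite indices by $\pm 2$ (so any common $l$ for $L$ and $H$ is at least $2$), and bounded non-polynomial $V,A$ are not band-limited in the Hermite basis at all. The natural workaround is to split $H=H_{\mathrm{kin}}+H_1$ into its kinetic, band-limited (with $l=2$) part and its bounded multiplicative remainder $H_1$, applying the original Hypothesis A estimate \eqref{eqthMor4} to $H_{\mathrm{kin}}$, and replacing it for $H_1$ by the direct bound $\|(H_1-P_mH_1P_m)x\|\leq\|H_1\|_\infty\,\|(1-P_m)x\|+\|(1-P_m)H_1x\|\leq \text{const}\cdot\lambda_m^{-q}\|x\|_{C^q}$, where $\|C^qH_1x\|\leq\text{const}\cdot\|x\|_{C^q}$ follows from the commutator analysis above. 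For $q\geq 2$ this bound is summable and the Cauchy argument in the proof of Theorem \ref{thMorRebLin} goes through with effective $l=2$, completing the proof.
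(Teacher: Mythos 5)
Your proposal is correct and, for the main part (Hypothesis MR), follows exactly the route the paper intends: the paper's entire justification is the phrase ``direct consequence of Proposition \ref{critdiss}'', and your commutator computations ($[\hat x^2,\hat p^2]=2i\{\hat x,\hat p\}$, $[\hat p^2,V]=-i(\hat pV'+V'\hat p)$, the Leibniz expansion for $C^q$) are precisely the verification that is being left implicit. Two remarks. First, a small caveat on your $C^q$ step: pushing $C^j$ past $[H,C]$ repeatedly differentiates the coefficients, so for large $q$ you need more than two bounded derivatives of $V$ and $A$; this is absorbed by the proposition's ``regular enough'' wording but is worth stating. Second, and more substantively, your treatment of Hypothesis A(a) is \emph{more careful than the paper's} and flags a genuine issue: the paper asserts $l=1$, but $\hat p^2$ maps $\HC_m$ only into $\HC_{m+2}$ in the Hermite basis, and a bounded non-polynomial $V$ (or $A$) is not band-limited at all, so $H:\HC_m\to\HC_{m+l}$ fails literally for any finite $l$. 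Your patch --- split $H$ into the band-limited kinetic part (with $l=2$) plus a bounded remainder $H_1$, and replace the band-limitedness estimate \eqref{eqthMor4} for $H_1$ by the direct bound $\|(H_1-P_mH_1P_m)x\|\le \|H_1\|\,\|(1-P_m)x\|+\|(1-P_m)H_1x\|$ together with $\|C^qH_1x\|\lesssim\|x\|_{C^q}$ --- is sound and is exactly what is needed to make the Cauchy argument in Theorem \ref{thMorRebLin} go through; the effective decay rate is even better than the $\la_{m-l+1}^{-1/2}$ used there. In short: same skeleton as the paper, but your version repairs the $l=1$ claim, which as stated does not hold for the class of potentials the proposition advertises.
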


\begin{remark}
In \cite{MoraRebolin} this result was proved by rather lengthy calculations with Hermite polynomials only for vanishing $V,A$ and in one-dimensional case.
\end{remark}

\subsection{Linear equation: explicit calculations}

Let us now outline an alternative, more constructive, approach (developed essentially in \cite{BelKol91} and \cite{Kolok95}) to the equations arising at the continuous observation of momentum and position of a standard quantum system considered above in Proposition \ref{critdissexam}. 

 In this approach one solves explicitly the corresponding filtering equation with quadratic $V$ and linear $A$, and more general (sufficiently regular) potentials are treated via perturbation technique. Let us consider here only the case with position measurement, because including momentum just makes the calculations more lengthy (not requiring any additional ideas). For vanishing potential the corresponding filtering equation \eqref{eqqufiBlins} takes the form

\begin{equation}
\label{eqqufiBlinGa}
d\chi(t) =\frac12 (ih\De-\al^2 x^2) \chi(t) \,dt+\al x \chi(t) dY(t),
\end{equation}
where $\chi(t)\in L^2(\R^d)$, $Y(t)=(Y_1, \cdots, Y_d)(t)$ is a $d$-dimensional Brownian motion (BM), $\al$ real and $h$ positive constants.

It was proved in \cite{Kolok95} (and can be checked by direct calculations) that this equation has the Green function (fundamental solution for the Cauchy problem)
expressed in the explicit Gaussian form: 

\begin{equation}
\label{eqqufiBlinGaSo}
u_G(t,x,y)=\exp\{-\frac{\om}{2} (x^2+y^2) +\be xy -a x -b y -\ga \}
\end{equation}
where
\[
\om=\si \coth (\si Gt), \quad
\be=\si (\sinh(\si G t))^{-1}, \quad C=\sqrt{\be/(2\pi)}.
\]
\[
a= \al (\sinh(\si G t))^{-1}\int_0^t  \sinh(\si G s) dB(s)
\]
\[
b=\si G \int_0^t \frac{a(s)}{\sinh(\si G s)} ds,
\quad \ga=\frac12 \int_0^t a^2(s)ds
\]
with
$\si=\sqrt{2\al^2/ih}=\sqrt{2\al^2/h} \exp\{-i\pi/4\}$.

It follows, that for small $t$,
\[
\om=\frac{1}{iht} +\frac23 \al^2 t +O(t^3), \quad
\be=\frac{1}{iht}-\frac13 \al^2 +O(t^3).
\]
and 

\begin{equation}
\label{asympstochosc}
a \sim \frac{\al}{t} \xi(t), \quad \xi(t)=\int_0^t s \, dB(s), \quad
b \sim \al \int_0^t \frac{\xi(s)}{s^2} ds.
\end{equation}

Let us introduce
the Hilbert space $L^2_R=L^2_R(\R^d)$ of functions from $L^2(\R^d)$ with 
a finite norm squared
\[
\|f\|^2_R =\|f\|^2+\sum_j\|x_jf\|^2+\sum_j \|\frac{\pa}{\pa x_j} f\|^2.
\]

\begin{prop}
\label{propsmoo}
The resolving operator $U_t$ for the Cauchy problem to equation \eqref{eqqufiBlinGa} with the integral kernel \eqref{eqqufiBlinGaSo} takes $L^2(\R^d)$ to $L^2_R(\R^d)$ for any $t>0$ and moreover, for $t\in [0,T]$ with any $T$, and any $p\in [1,2)$
\begin{equation}
\label{eqpropsmoo}
\E \|U_t\|^p \le C, \quad  \E \|U_t\|^p_{L^2_R(\R^d)} \le C,   
\end{equation}
with a constant $C$ depending on $T$ and $p$. The norm squared $\|U_t\|^2$ has no finite expectation.   
\end{prop}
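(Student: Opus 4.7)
The plan is to analyse the explicit kernel by separating a deterministic ``Mehler-type'' operator from purely stochastic exponential factors, and then to estimate each piece. Writing $u_G(t,x,y)=e^{-\gamma}\,k_0(t,x,y)\,e^{-ax-by}$ with $k_0(t,x,y)=\exp\{-\frac{\omega}{2}(x^2+y^2)+\beta xy\}$, the identity $\omega^2-\beta^2=\sigma^2$ (which follows from $\coth^2-\operatorname{csch}^2=1$) together with the explicit formulas for $\omega,\beta$ identifies $k_0$ with a rescaled Mehler kernel of parameter $r=e^{-\sigma Gt}$ of modulus $|r|=e^{-Gt\,\mathrm{Re}\,\sigma}<1$ for $t>0$. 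Hence the associated deterministic operator $V_t$ is bounded on $L^2(\R^d)$ with a deterministic norm, and the Gaussian decay of the kernel allows one to differentiate in $x$ and multiply by $x$ once without destroying $L^2$-integrability, producing a deterministic bound on $\|V_t\|_{L^2\to L^2_R}$ as well.

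Next I would reintroduce the random affine terms by completing the square in the $y$-integration. For $f\in L^2(\R^d)$, the integral $\int k_0(t,x,y)\,e^{-by}f(y)\,dy$ can be rewritten, after completing the square in $y$, as a product of $\exp\{Q(x;b)\}$ and $(\widetilde V_t f)(x)$, where $\widetilde V_t$ is a renormalised bounded deterministic operator and $Q(x;b)$ is a polynomial of degree two in $x$ and degree at most two in $b$. Multiplying by $e^{-ax-\gamma}$ and taking $L^2$-norms in $x$ yields, after a further Gaussian integration in $x$, a bound of the form $\|U_t f\|_{L^2}\le C(t)\exp\{\mathrm{Re}\,\Phi(a,b)-\mathrm{Re}\,\gamma\}\|f\|_{L^2}$ for some deterministic quadratic form $\Phi$, together with the parallel bound for the $L^2_R$ norm obtained by inserting $x$ and $\partial_x$ and absorbing the resulting polynomial factors into the Gaussian integration.

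The final step is the Gaussian moment estimate. From \eqref{asympstochosc} the complex random variables $a,b$ are centred Gaussian linear functionals of the underlying Brownian path with variances bounded on $[0,T]$, and $\gamma=\frac12\int_0^t a^2(s)\,ds$ is quadratic in it, so $\mathrm{Re}\,\Phi(a,b)-\mathrm{Re}\,\gamma$ is a real quadratic form in $B|_{[0,t]}$. Standard Gaussian exponential-moment identities give $\E \exp\{p(\mathrm{Re}\,\Phi(a,b)-\mathrm{Re}\,\gamma)\}<\infty$ if and only if $p$ lies below an explicit critical value, which a direct calculation identifies as $2$, yielding $\E\|U_t\|^p\le C_{T,p}$ for $p\in[1,2)$ and the corresponding bound for $\E\|U_t\|^p_{L^2\to L^2_R}$. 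For the divergence of $\E\|U_t\|^2$, testing $U_t$ on Gaussian coherent states $f_d(y)=\exp(-c|y|^2/2+d\cdot y)$ and optimising in $d$ produces a matching lower bound $\|U_t\|^2\ge c'\exp\{c''(|a|^2+|b|^2)\}$ whose expectation diverges at precisely the same critical exponent. The main technical obstacle is the completion-of-the-square step: since $\omega$ and $\beta$ are complex, the translation in $y$ is by a complex amount and must be justified through the analyticity of $k_0$ in $y$ (Gaussian, hence entire) and contour deformation, while the precise coefficients of $\Phi$ must be tracked carefully to obtain the sharp threshold $p=2$.
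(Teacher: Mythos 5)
Your overall strategy is the same as the paper's: peel off the deterministic Mehler-type Gaussian part (whose operator norm is computed by comparison with the oscillator semigroup), remove the random linear terms by completing the square, and reduce everything to exponential moments of a Gaussian quadratic expression in $a$ and $b$. The paper executes this slightly more efficiently by computing $\|U_t\|^2=\|U_tU_t^*\|$ \emph{exactly} as a random variable (up to a deterministic $1+O(t)$ factor), so that the finiteness for $p<2$ and the divergence at $p=2$ both drop out of a single computation; your route of a separate upper bound plus a coherent-state lower bound can work for Gaussian kernels (coherent states are indeed extremizers), but it forces you to verify that the two quadratic forms in the exponents match exactly, which you do not do.

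The genuine gap is at the decisive quantitative step. You write that ``a direct calculation identifies'' the critical exponent as $2$, but this calculation is precisely the nontrivial content of the proposition (the paper's own remark stresses that the boundary $p=2$ ``comes out as the result of some not obvious explicit calculations''). Concretely, one must (i) observe that $\gamma_R$ contributes only a bounded factor, so that the infinite-dimensional quadratic functional $\mathrm{Re}\,\Phi(a,b)-\mathrm{Re}\,\gamma$ of the Brownian path reduces to a quadratic form in the two-dimensional Gaussian vector $(a_R,b_R)$ alone; (ii) compute from \eqref{asympstochosc} the small-time covariance structure $Var(a_R)=Var(b_R)=2\,Cov(a_R,b_R)=\alpha^2 t/3$; and (iii) diagonalize $a_R^2-a_Rb_R+b_R^2=(a_R-\tfrac12 b_R)^2+\tfrac34 b_R^2$ to exhibit the exponent as $\tfrac{p}{4}(z_1^2+z_2^2)$ with $z_1,z_2$ independent standard normals, whose $\chi^2_2$ moment generating function $(1-2s)^{-1}$ gives finiteness exactly for $p<2$. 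Without steps (i)--(iii) your argument establishes neither the bound for $p\in[1,2)$ nor the divergence at $p=2$; with them, your plan becomes essentially the paper's proof. A second, smaller omission: for the first half of the proposition you assert a deterministic bound on $\|V_t\|_{L^2\to L^2_R}$ from Gaussian decay, whereas the paper derives the $L^2_R$ estimate from the $L^2$ estimate by integration by parts in $y$ (using $\partial_{x_j}u_G+\partial_{y_j}u_G=(\beta-\omega)(x+y)-a-b$ and that $\beta-\omega=O(t)$, $|\beta|^{-1}=O(t)$), which keeps track of the random coefficients $a,b$ entering the bound; your version leaves the dependence of the $L^2\to L^2_R$ norm on $a,b$ unexamined, which matters because that norm must also have finite $p$-th moment.
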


\begin{remark}
Note that neither of these estimates follow from the general Theorem
\ref{thMorRebLin}. For instance, the conservation property of the norm squared of the solution from this theorem implies only that $\E \|U_t\|^2\ge 1$.  The fact that $p=2$ is
the exact boundary between the existence and nonexistence of the expectation of 
$\|U_t\|^p$ comes out as the result of some not obvious explicit calculations. It leads to the natural questions: (i) what is the intuition behind this fact? (ii) Is it a casual fact about a particular equation, or is it a performance of some general  effect?
\end{remark}


\begin{proof}
Step 1. The first statement is evident from the form 
of the Green function.

Let us prove that the second estimate in \eqref{eqpropsmoo}
follows from the first one. To this end we observe that 
\[
\frac{\pa}{\pa x_j} u_G(t,x,y)=-\om x +\be y -a,
\quad 
\frac{\pa}{\pa y_j} u_G(t,x,y)=-\om y +\be x -b,
\]
so that 
\[
\frac{\pa}{\pa x_j} u_G(t,x,y)+\frac{\pa}{\pa y_j} u_G(t,x,y)
=(\be-\om)(y+x) -a-b.
\]

Hence, 
\[
\int x_j u_G(t,x,y) f(y) dy 
=\frac{1}{\be} \int \frac{\pa}{\pa y_j} u_G(t,x,y) f(y) dy 
+\int u_G(t,x,y)\frac{\om y+b}{\be} f(y) dy
\]
\[
=-\frac{1}{\be} \int u_G(t,x,y) \frac{\pa}{\pa y_j} f(y) dy 
+\int u_G(t,x,y)\frac{\om y+b}{\be} f(y) dy
\]
And thus
\[
\|\hat x_j U_t f\|\le \frac{1}{|\be|} \|U_t \| \, \|p_j f\|
+\frac{|\om|}{|\be|} \|U_t\| \, \|x_j f\|+\frac{|b|}{|\be|} \|U_tf\|.
\]

Similarly 
\[
\int \frac{\pa}{\pa x_j} u_G(t,x,y) f(y) dy 
=-\int \frac{\pa}{\pa y_j} u_G(t,x,y) f(y) dy 
+\int [(\be -\om) (y+x)-(a+b)]u_Gt,(x,y) f(y) dy,
\]
 so that
 \[
  \|\hat p_j U_t f\| \le \|U_t\|\, \|p_j f\|  
+(\be-\om) \|U_t\| \, \|\hat x_j f\|  
+  |a+b| \|U_t f\| + (\be-\om)  \| x_j U_t f(y) \|.
 \]
 Therefore,
\begin{equation}
\label{eqnormregU}
\E \|\hat x_j U_t f\|\le  \left( \frac{1+|\om|}{|\be|} \E \|U_t \| 
+\E \frac{|b|}{|\be|} \|U_t f\|\right) \|f\|_{L^2_R(\R^d)},
\end{equation} 
\begin{equation}
\E \|\hat p_j U_t f\| \le \left((1+\be-\om) \E \|U_t\|   
+ \E |a+b| \|U_t \| \right) \|f\|_{L^2_R(\R^d)}
+ (\be-\om) \E \| x_j U_t f(y) \|.
\end{equation}
Taking into account that $\be-\om=O(t)$, $|\be|^{-1}=O(t)$ and that $a,b$ are small for small $t$, we can in fact conclude that 
 the second estimate in \eqref{eqpropsmoo}
follows from the first one.

Step. 2.
The norm squared of $U_t$ is the norm of the operator $U_tU_t^*$ with the kernel
\[
|C|^{2m} (\pi/ Re \, \om)^{m/2}
\exp\bigl\{-\frac12 (Re \,\om -\frac{Re (\be^2)}{2\, Re \, \om})(x^2+y^2)
+\frac{|\be_G|^2}{2\, Re \, \om_G} xy
\]
\[
-(x+y)(a_R+\frac{\be_R b_R}{\om_R})
-(2\ga_R -\frac{b_R^2}{\om_R}) \bigr\}.
\]
By shifting $x,y$ on 
\[
\xi=\frac{\om_R(a_R+\frac{\be_R b_R}{\om_R})}{\om_R^2-\be_R^2},
\]
we find that the norm of this operator is the same as that of
the operator
\[
|C|^{2m} (\pi/ Re \, \om)^{m/2}
\exp\left\{-\frac12 (Re \,\om -\frac{Re (\be^2)}{2\, Re \, \om})(x^2+y^2)
+\frac{|\be|^2}{2\, Re \, \om} xy \right\}
\]
\[
\times \exp\left\{\frac{\om_R(a_R+\frac{\be_R b_R}{\om_R})^2}{\om_R^2-\be_R^2}
-(2\ga_R -\frac{b_R^2}{\om_R}) \right\}.
\]
Comparing with the resolving operator of quantum oscillator and using small time asymptotics of $\om,\be$ we find that the deterministic part of the kernel gives the norm $1+O(t)$, so that 
\[
\E \|U_t\|^2=(1+O(t))
\E \exp\left\{\frac{\om_R(a_R+\frac{\be_R b_R}{\om_R})^2}{\om_R^2-\be_R^2}
-(2\ga_R -\frac{b_R^2}{\om_R}) \right\}.
\]

Since $\ga_R$ is seen to be small, we really need to  estimate the expectation 
\[
\E \exp\left\{\frac{\om_R(a_R+\frac{\be_R b_R}{\om_R})^2}{\om_R^2-\be_R^2}
+\frac{b_R^2}{\om_R} \right\},
\]
which up to terms of lower order rewrites as 
\begin{equation}
\E \exp\left\{ \frac{\om_R}{\om_R^2-\be_R^2}(a_R^2-a_R b_R+b_R^2)\right\}
\sim \E \exp\left\{\frac{2}{\al^2 t}(a_R^2-a_Rb_R+b_R^2)\right\}.
\end{equation}

Similarly, for any $p\ge 1$,
\begin{equation}
\E \, \|U_t\|^p 
\sim \E \exp\left\{\frac{p}{\al^2 t}(a_R^2-a_Rb_R+b_R^2)\right\}.
\end{equation}

Step 3.
By \eqref{asympstochosc}, we see that asymptotically (for small $t$)
\[
Var (a_R) =Var (b_R)=2 \, Cov \, (a_R,b_R)=\al^2 t/3.
\]
Consequently,
\[
a_R^2-a_Rb_R+b_R^2=(a_R-\frac12 b_R)^2+\frac34 b_R^2,
\]
with
\[
Var (a_R-\frac12 b_R)=Var (\frac{\sqrt 3}{4}b_R)=\frac14 \al^2 t,
\quad Cov \,  (a_R-\frac12 b_R,\frac{\sqrt 3}{4}b_R)=0.
\]
Consequently, up to a multiplier of type $(1+O(t))$ we see that 
\[
\E \, \|U_t\|^p=\E \exp\{ \frac{p}{4} (z_1^2+z_2^2)\}, 
\]
with $z_1,z_2$ independent standard normal, so that $z_1^2+z_2^2$ has the
standard $\chi^2$ distribution of degree 2. This distribution has the MGF 
 $(1-2t)^{-1}$ for $t<1/2$. Therefore, $\E \,\|U_t\|^p$ is finite exactly when $p<2$.
\end{proof}

This Proposition implies the following result.
\begin{prop}
\label{propsmoo1}
There exists a unique strong solution
to \eqref{eqqufiBlinGa} for any initial condition (not necessary regular as in Theorem \ref{thMorRebLin}), the equation being satisfied generally for all $t>0$, and, for the initial condition from $L^2_R(\R^d)$, for all $t\ge 0$.
\end{prop}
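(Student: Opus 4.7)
The plan is to construct the solution explicitly as $\chi(t):=U_t\chi_0$ via the Green function $u_G(t,x,y)$ of \eqref{eqqufiBlinGaSo}, and then verify a posteriori that it satisfies the SDE with the required integrability. For any $\chi_0\in L^2(\R^d)$ and any $t>0$, the Gaussian structure of the kernel (with $Re\,\om(t)>0$) makes $\chi(t)$ smooth with rapid decay in $x$, so $\hat x^2\chi(t)$, $\De\chi(t)$ and $\hat x\chi(t)$ all lie in $L^2$ almost surely. The stochastic integrability bounds of Proposition \ref{propsmoo}, namely $\E\|U_t\|^p+\E\|U_t\|^p_{L^2_R}<\infty$ for $p\in[1,2)$, combined with the short-time asymptotics of $\om,\be,a,b$ in \eqref{asympstochosc}, give enough control to make sense of the It\^o integral $\int_0^t\al\,\hat x\chi(s)\,dY(s)$ on any subinterval bounded away from zero, and for $\chi_0\in L^2_R$ to extend this control to the full interval $[0,T]$.

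To verify the SDE I would apply It\^o's formula to $u_G(t,x,y)=\exp\{f(t,x,y)\}$ with $y$ viewed as a parameter. In $f$ the coefficients $\om(t),\be(t)$ are deterministic and satisfy the Riccati-type ODEs forced by matching the $x^2$- and $xy$-coefficients in the Gaussian ansatz, while $a(t),b(t),\ga(t)$ are semimartingales whose It\^o differentials are read off from their explicit integral definitions: schematically $da=\al\,dY-\om a\,dt$, $db=\be a\,dt$, $d\ga=\tfrac12 a^2\,dt$. Substituting into $df$ and adding the It\^o correction $\tfrac12 d\langle f\rangle=\tfrac12\al^2 x^2\,dt$ coming from the quadratic variation of the $\hat x a$-term in the exponent, one checks that all non-PDE terms cancel and $du_G$ reduces precisely to $\tfrac12(ih\De-\al^2 x^2)u_G\,dt+\al x\,u_G\,dY$. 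A stochastic Fubini argument, legitimized by the integrability bounds above, then transfers this identity to $\chi(t)=\int u_G(t,\cdot,y)\chi_0(y)\,dy$, yielding the required SDE. For $\chi_0\in L^2_R$, strong continuity of $U_t\chi_0$ at $t=0^+$ in the $L^2_R$-norm (a consequence of the smoothing estimates and the approximate identity property of $u_G$) then allows the integral form of the equation to hold down to $t=0$.

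Uniqueness is reduced by linearity to showing that a strong solution with zero initial data vanishes identically. If $\De(t)$ denotes the difference of two such solutions, applying It\^o to $\|\De(t)\|^2$ yields $d\|\De\|^2=2\al(\De,\hat x\De)\,dY$, a continuous positive local martingale starting at $0$. Introducing the stopping time $\tau_n=\inf\{t:\|\De(t)\|^2>n\}$, taking expectations up to $t\wedge\tau_n$, and then letting $n\to\infty$ by monotone convergence, exactly as in the uniqueness part of the proof of Theorem \ref{thMorRebLin}, forces $\De\equiv 0$.

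The principal technical difficulty I anticipate is the rigorous execution of this It\^o differentiation of the kernel together with the stochastic Fubini step, because $u_G(t,\cdot,y)$ is not an $L^2$-function uniformly in $(t,y)$ (for small $t$ it behaves as a shrinking Gaussian approximating $\de_y$) and the coefficient $\hat x^2$ is unbounded. The cleanest way around this is to run the verification first for $\chi_0$ in the Schwartz class, where all moments and derivatives provide uniform control of the integrands in $y$, and only then extend to $L^2$ or $L^2_R$ by density, using the subquadratic moment bounds of Proposition \ref{propsmoo} to pass to the limit in the drift and diffusion integrals.
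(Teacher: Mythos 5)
Your proposal is correct and follows essentially the route the paper intends: Proposition \ref{propsmoo1} is stated in the paper as a direct consequence of the explicit Green function \eqref{eqqufiBlinGaSo} and the mapping/moment estimates of Proposition \ref{propsmoo}, with no further details supplied, and your construction of $\chi(t)=U_t\chi_0$, It\^o verification of the kernel, and local-martingale uniqueness argument (as in the proof of Theorem \ref{thMorRebLin}) fill in exactly those details.
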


Via the standard perturbation argument this result can be extended to more general equations 
\begin{equation}
\label{eqqufiBlinGaM}
d\chi(t) =(-iH -\frac12 \al^2 x^2) \chi(t) \,dt+\al x \chi(t) dY(t),
\end{equation}
with $H$ of form \eqref{eqgenHam} with sufficiently regular $V$ and $A$. For instance, the following result is straightforward.

\begin{prop}
\label{propsmoo2}
If $A=0$ and $V$ is bounded with bounded continuous first and second order derivatives,
there exists a unique strong solution
to \eqref{eqqufiBlinGaM} for any initial condition, the equation being satisfied generally for all $t>0$, and, for the initial condition from $L^2_R(\R^d)$, for all $t\ge 0$.
\end{prop}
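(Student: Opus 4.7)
The approach is the standard variation-of-constants perturbation. Write $H = H_0 + V$ where $H_0$ is the free part already absorbed into the drift of the base equation \eqref{eqqufiBlinGa}, so that \eqref{eqqufiBlinGaM} is precisely equation \eqref{eqqufiBlinGa} perturbed by the bounded (anti-self-adjoint) multiplication operator $-iV$. Let $U_{t,s}$ denote the stochastic two-parameter evolution of the base equation furnished by Proposition \ref{propsmoo1}; by time-homogeneity $U_{t,s} = \theta_s U_{t-s}$, where $\theta_s$ shifts the driving Brownian path by $s$. My plan is to solve the mild (Duhamel) equation
\[
\chi(t) = U_{t,0}\chi_0 - i \int_0^t U_{t,s} V \chi(s) \, ds
\]
by a contraction-mapping argument, then verify that its solution satisfies \eqref{eqqufiBlinGaM}.

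The main obstacle is that $\E\|U_t\|^2 = +\infty$ by Proposition \ref{propsmoo}, so the fixed-point argument cannot be closed in $L^2(\Om)$. Instead, fix any $p \in (1,2)$ and work in the Banach space $X_T$ of adapted $L^2(\R^d)$-valued processes with norm $\|\chi\|_{X_T}^p := \sup_{t\le T}\E\|\chi(t)\|^p$. Defining $\Phi(\chi)(t) := U_{t,0}\chi_0 - i \int_0^t U_{t,s} V \chi(s) \, ds$ and taking $\chi_1,\chi_2 \in X_T$, Jensen's inequality gives
\[
\E\|\Phi(\chi_1)(t)-\Phi(\chi_2)(t)\|^p \le \|V\|_\infty^p \, t^{p-1} \int_0^t \E\|U_{t,s}(\chi_1(s)-\chi_2(s))\|^p \, ds.
\]
Because $U_{t,s}$ depends only on Brownian increments over $[s,t]$, it is independent of $\FC_s$; conditioning on $\FC_s$ and using \eqref{eqpropsmoo} yields $\E\|U_{t,s}(\chi_1(s)-\chi_2(s))\|^p \le C \, \E\|\chi_1(s)-\chi_2(s)\|^p \le C\|\chi_1-\chi_2\|_{X_T}^p$. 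Hence $\Phi$ is a contraction on $X_T$ when $T$ is small enough, and patching over successive short intervals produces a unique global mild solution.

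Finally, one verifies that this mild solution is a strong solution of \eqref{eqqufiBlinGaM} and addresses regularity. The second bound in \eqref{eqpropsmoo}, applied termwise to the mild equation, shows that $\chi(t) \in L^2_R(\R^d)$ for each $t>0$, so $H\chi(t)$ and $x^2\chi(t)$ are bona fide elements of $L^2$. Applying It\^o's formula to $s \mapsto U_{t,s}\chi(s)$ (invoking the stochastic Fubini theorem to differentiate the Duhamel integral in $t$) recovers equation \eqref{eqqufiBlinGaM}: the drift and diffusion terms come from the $t$-evolution of $U_{t,s}$, which solves the base equation, while the boundary term at $s=t$ contributes the perturbation $-iV\chi(t)\,dt$. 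When $\chi_0 \in L^2_R(\R^d)$, the leading term $U_{t,0}\chi_0$ already lies in $L^2_R$ uniformly down to $t=0$ by Proposition \ref{propsmoo1}, so the equation holds at $t=0$ as well. Uniqueness at the SDE level follows from the fixed-point argument, since any strong solution of \eqref{eqqufiBlinGaM} satisfies the mild equation by the same It\^o-Fubini computation run in reverse.
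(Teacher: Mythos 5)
Your proposal is correct and is precisely the ``standard perturbation argument'' that the paper invokes without writing out: the Duhamel (mild) formulation around the explicit propagator $U_{t,s}$ of \eqref{eqqufiBlinGa}, with the fixed point taken in the $p$-th moment norm for some $p\in(1,2)$ and the contraction estimate closed via the independence of $U_{t,s}$ from $\FC_s$ together with \eqref{eqpropsmoo} --- which is exactly the role Proposition \ref{propsmoo} is designed to play, given that $\E\|U_t\|^2=\infty$. The only point I would tighten is the verification that mild implies strong: since $U_{t,s}$ anticipates the noise on $[s,t]$, do not apply It\^o's formula to $s\mapsto U_{t,s}\chi(s)$ directly, but rather substitute the Duhamel representation into the integrated form of \eqref{eqqufiBlinGaM} and use the (stochastic) Fubini theorem together with the fact that $r\mapsto U_{r,s}\xi$ solves the base equation for $r\ge s$.
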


\subsection{Nonlinear equation: weak solutions}
\label{secpureeqnonlin}

In the setting of Theorem \ref{thMorRebLin}, solutions to the corresponding nonlinear equation have only been built in the weak (probabilistic) sense.
Namely, a $C$-weak solution of equation \eqref{eqqufiBnonlin} with initial condition $\phi_0$ on a time interval $[0,T]$ is the pair
$\{\phi(t), B(t)\}$ of adapted continuous processes defined on some filtered probability space $(\Om, \FC, \FC_t,\Q)$ (satisfying the usual conditions), where $\phi(t)$ is $\HC$-valued, $B(t)$ is an $n$-dimensional standard BM, s.t. (i) $\phi(0)=\phi_0$ and $\|\phi(t)\|=1$ for all $t$ a.s., (ii) $\phi(t)\in Dom(C)$ a.s. and
\[
\sup_{t\in [0,T]} \E \| C \phi(t)\|^2 <\infty,
\]
(iii) equation \eqref{eqqufiBnonlin} holds. This solution is $C$-strong if $\phi(t)$ is measurable with respect to the augmented $\si$-algebra generated by the BM $B(t)$. 

The following result is a simplified version of Theorem 1 from \cite{MoraRebo}:

\begin{theorem} (Mora-Rebolledo) Under Hypothesis MR (with either \eqref{MR1} or \eqref{MR2}), for any $\phi_0\in Dom(C)$ with $\|\phi_0\|=1$ there exists a unique in law (that is, all solutions have identical finite-dimensional distributions) 
$C$-weak solution $\{\phi(t), B(t)\}$ to equation \eqref{eqqufiBnonlin}, and moreover, this solution satisfies the estimate
\begin{equation}
\label{eqthMornon1}
\E \|C\phi(t)\|^2 \le e^{\al t} [\|C\phi_0\|^2+\al t(\|\phi_0\|^2+\be)].
\end{equation}
\label{thMorRebNonlin}
\end{theorem}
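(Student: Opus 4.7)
The strategy is to reduce Theorem~\ref{thMorRebNonlin} to the linear result Theorem~\ref{thMorRebLin} via the Girsanov linearization already sketched informally around equations \eqref{chisquare}--\eqref{chisqdensity}: the normalization $\phi(t)=\chi(t)/\|\chi(t)\|$ together with the change of measure $d\Q/d\P|_{\FC_t}=\|\chi(t)\|^2$ converts the linear Belavkin equation into the nonlinear one, and this correspondence can be inverted to obtain uniqueness.

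For existence, start on a stochastic basis $(\Om,\FC,\FC_t,\P)$ carrying a BM $Y(t)$, and let $\chi(t)$ be the unique conservative $C$-strong solution to \eqref{eqqufiBlinint} with $\chi_0=\phi_0$ provided by Theorem~\ref{thMorRebLin}. Since $\|\chi(t)\|^2$ is a positive $\P$-martingale of unit expectation, it defines a probability measure $\Q$ on $\FC_T$. Positivity of $\|\chi(t)\|$ under $\Q$ follows from the martingale property: if $\|\chi(t_0)\|=0$ then $\|\chi(s)\|=0$ for all $s\ge t_0$, so $\Q(\|\chi(t_0)\|=0)=\E_\P[\|\chi(T)\|^2\,\1_{\|\chi(t_0)\|=0}]=0$. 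Hence $\phi(t):=\chi(t)/\|\chi(t)\|$ is well defined $\Q$-a.s.\ with $\|\phi(t)\|=1$. Ito's formula applied to this quotient, using \eqref{chisquare}, produces equation \eqref{eqqufiBnonlin} with innovation $B(t)=Y(t)-2\int_0^t(\phi(s),L_S\phi(s))\,ds$, which is a $\Q$-BM by Girsanov's theorem. Estimate \eqref{eqthMornon1} then follows from the identity $\E_\Q\xi_t=\E_\P[\|\chi(t)\|^2\xi_t]$ applied to $\xi_t=\|C\chi(t)\|^2/\|\chi(t)\|^2$, giving $\E_\Q\|C\phi(t)\|^2=\E_\P\|C\chi(t)\|^2$, and then from \eqref{eqthMor1}.

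For uniqueness in law, let $(\phi,B)$ be any $C$-weak solution on some $(\Om',\FC',\FC'_t,\Q')$. The a priori bound $\sup_{t\le T}\E_{\Q'}\|C\phi(t)\|^2<\infty$ combined with \eqref{MR0a} gives $\E_{\Q'}\int_0^T|\langle L_S\rangle_{\phi(s)}|^2\,ds<\infty$. Introducing stopping times $\tau_n=\inf\{t:\int_0^t|\langle L_S\rangle_\phi|^2\,ds\ge n\}$, the stopped exponential
\[
Z_n(t)=\exp\Bigl(-2\int_0^{t\wedge\tau_n}\langle L_S\rangle_\phi\,dB-2\int_0^{t\wedge\tau_n}|\langle L_S\rangle_\phi|^2\,ds\Bigr)
\]
is a genuine $\Q'$-martingale by Novikov's condition, and setting $d\P'_n/d\Q'=Z_n(T)$ makes $Y(t):=B(t\wedge\tau_n)+2\int_0^{t\wedge\tau_n}\langle L_S\rangle_\phi\,ds$ a $\P'_n$-BM. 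A direct Ito calculation on $\chi_n(t):=Z_n(t)^{-1/2}\phi(t)$—in which the nonlinear drift of $\phi$ combines with the Ito correction from $Z_n^{-1/2}$ and the substitution $dB+2\langle L_S\rangle_\phi dt=dY$ to collapse into the linear drift—shows that $\chi_n$ satisfies \eqref{eqqufiBlinint} on $[0,\tau_n]$ under $\P'_n$, with the $C$-norm bound transferring via $\E_{\P'_n}\|C\chi_n(t)\|^2=\E_{\Q'}\|C\phi(t)\|^2$. By the uniqueness half of Theorem~\ref{thMorRebLin}, the law of $\chi_n$ is uniquely determined by $\phi_0$, and inverting the construction ($\phi=\chi_n/\|\chi_n\|$, $d\Q'/d\P'_n=\|\chi_n\|^2$) determines the law of $\phi$ on $[0,\tau_n]$. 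Since $\tau_n\uparrow T$ a.s., passing to the limit completes the uniqueness.

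The hardest step is this uniqueness argument: making the reverse Girsanov transformation rigorous for the possibly unbounded functional $\langle L_S\rangle_\phi$ forces the localization, and one must verify that the stopped $\chi_n$ retains the $C$-regularity needed to invoke Theorem~\ref{thMorRebLin}, together with the passage from strong uniqueness of the linear equation to the weak uniqueness needed here—a standard Yamada--Watanabe-type argument in this Hilbert-space setting, but one that should be stated explicitly.
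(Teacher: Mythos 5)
Your proposal is correct and follows essentially the same route as the paper: existence and the estimate \eqref{eqthMornon1} by normalizing the conservative linear solution of Theorem \ref{thMorRebLin} and changing measure via $\|\chi(t)\|^2$, and uniqueness by reconstructing a linear solution from any given nonlinear one through the stochastic exponential $\|\chi(t)\|^{-2}$ (your $Z_n$ is exactly the stopped version of the process defined by \eqref{chisquare4}) and invoking uniqueness for the linear equation. The only difference is that you make explicit the localization via $\tau_n$, the Novikov verification, and the Yamada--Watanabe-type passage from pathwise uniqueness of the linear equation to uniqueness in law of the nonlinear one, all of which the paper's sketch leaves implicit.
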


\begin{proof}
 We sketch here a simplified argument. First of all, the existence of a solution is mostly straightforward from Theorem \ref{thMorRebLin} and the remark above (see \eqref{eqdefinnov}) that, if $\chi(t)$ solves \eqref{eqqufiBlin}, then $\phi(t)=\chi(t)/\|\chi(t)\|$ solves \eqref{eqqufiBnonlin}. The estimate \eqref{eqthMornon1} follows from \eqref{chisqdensity} and
\eqref{eqthMor1}. The difficult part of the proof of \cite{MoraRebo} concerns uniqueness. It is proved there independently of the link with the linear equation. However, this link can be used to obtain a simpler proof, as was noted in \cite{K}.
Namely: let $\phi(t)$ have unit norms for all $t$ and satisfy the
nonlinear equation \eqref{eqqufiBnonlin}. Define $\|\chi(t)\|^{-2}$ as the solution
(with the initial condition equal to $1$) to the equation
 \begin{equation}
\label{chisquare4}
d\frac{1}{\|\chi(t)\|^2}=-\frac{2}{\|\chi(t)\|^2}
(\phi(t),L_S \phi(t)) dB(t).
\end{equation}
 Then the vectors
$\chi(t)=\phi(t) \|\chi(t)\|$ satisfy the linear equation \eqref{eqqufiBlin} with $Y(t)$ given by \eqref{eqdefinnov}.
Consequently, to each solution of the nonlinear equation there corresponds a 
(uniquely defined) solution of the linear one, and vise versa. Since we have uniqueness for the latter, we derive uniqueness for the former. 
\end{proof}

\subsection{Nonlinear equation: strong solutions}
\label{secpureeqnonlin2}

Proving well-posedness of equation  
\eqref{eqqufiBnonlin} in the strong probabilistic sense (that is, with $\phi(t)$ being adapted to the filtration generated by the BM $B(t)$) is of course a  more difficult task,
as it does not seem to be derivable from the linear equation. In the case of a bounded $L$ 
and an arbitrary self-adjoint operator $H$ in the Hilbert space $\HC$ the well-posedness in the strong sense of equation \eqref{eqqufiBnonlin} was proved in \cite{KolQuantLLN}. We shall show now that a modification of our proof of Theorem  \ref{thMorRebLin} above can be used to construct a strong solution to \eqref{eqqufiBnonlin}.

\begin{theorem}
\label{strongpure}
Assume that Hypothesis A holds and Hypothesis MR holds with the second alternative \eqref{MR2}.
Then, given a BM $B(t)$ and $\phi_0\in Dom(C)$ with $\|\phi_0\|=1$ there exists a unique $C$-strong solution $\phi(t)$ to equation \eqref{eqqufiBnonlin}, which satisfies estimate \eqref{eqthMornon1}.
\end{theorem}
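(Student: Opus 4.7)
The plan is to carry out a Galerkin-type finite-dimensional approximation of \eqref{eqqufiBnonlin} on the filtration of the given Brownian motion $B(t)$, in direct analogy with the proof of Theorem \ref{thMorRebLin}. The crucial structural gain is that each approximating equation lives in $\HC_m$ and is driven solely by $B(t)$, so its solution is automatically strong in the probabilistic sense; strong existence for \eqref{eqqufiBnonlin} therefore reduces to a convergence argument.

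Concretely, I would define $\phi_m(t)\in\HC_m$ as the solution of the projected nonlinear SDE obtained from \eqref{eqqufiBnonlin} by replacing $H$, $L$ and $L^*L$ by $P_m H P_m$, $P_m L P_m$ and $L^*P_mL$ respectively (to match the structure of \eqref{MR2}), while keeping $\langle L_S\rangle_{\phi_m}$ as an ordinary scalar factor. On the unit sphere of the finite-dimensional space $\HC_m$ all coefficients are smooth, and the "completion of the square" identity used to derive \eqref{eqqufiBnonlinsn}, together with $[C,P_m]=0$, still yields $d\|\phi_m\|^2=0$. Hence $\|\phi_m(t)\|\equiv 1$ a.s., which rules out explosion and gives a unique global strong solution adapted to the filtration of $B(t)$.

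The next step is the uniform $C$-norm bound. Applying Ito to $\|C\phi_m(t)\|^2$ and using $CP_m=P_mC$, the terms linear in $H$ and quadratic in $L$ combine precisely into the left-hand side of \eqref{MR2} (evaluated at $\phi_m$), while all contributions carrying the scalar factor $\langle L_S\rangle_{\phi_m}$ reduce to
\[
2\langle L_S\rangle_{\phi_m}\,\mathrm{Re}\,(C(1-P_m)L\phi_m,\,C\phi_m),
\]
which vanishes identically, since $C(1-P_m)=(1-P_m)C$ and $C\phi_m\in\HC_m$. Gronwall then gives $\E\|C\phi_m(t)\|^2\le e^{\alpha t}[\|C\phi_0\|^2+\alpha t(1+\beta)]$ uniformly in $m$, which will ultimately yield \eqref{eqthMornon1}.

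The main step, and the expected principal obstacle, is a Cauchy estimate for $\E\|\phi_k(t)-\phi_m(t)\|^2$, $k>m$. Subtracting the two projected equations and applying Ito, the contributions not involving the operator differences $H_k-H_m$, $L_k-L_m$ or the scalar differences $\langle L_S\rangle_{\phi_k}-\langle L_S\rangle_{\phi_m}$ cancel by the same norm-preservation mechanism used above; the operator-difference terms are absorbed using \eqref{eqthMor4} (where Hypothesis A is essential) together with the uniform $C$-bound; and the scalar differences satisfy
\[
|\langle L_S\rangle_{\phi_k}-\langle L_S\rangle_{\phi_m}|\le \|\phi_k-\phi_m\|\,(\|L\phi_k\|+\|L\phi_m\|),
\]
where by \eqref{MR0a} the prefactor is only bounded in terms of $(\|\phi_k\|_C+\|\phi_m\|_C)^{1/2}$. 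This random Lipschitz coefficient is the genuine difficulty: it prevents a direct Gronwall argument on the unconditional expectation. The standard remedy is localization. Introducing $\tau_n=\inf\{t:\|\phi_k(t)\|_C\vee\|\phi_m(t)\|_C>n\}$ and applying Gronwall on $[0,t\wedge\tau_n]$ yields
\[
\E\|\phi_k(t\wedge\tau_n)-\phi_m(t\wedge\tau_n)\|^2 \le \frac{C(n)}{\sqrt{\lambda_{m-l+1}}},
\]
and a BDG-type bound for $\E\sup_{s\le T}\|C\phi_m(s)\|^2$ (combined with the $L^2$ bound of the previous step) makes $\P(\tau_n<T)$ small uniformly in $k,m$ as $n\to\infty$. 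Letting first $m\to\infty$ and then $n\to\infty$ shows that $(\phi_m)$ is Cauchy in $L^2(\Omega;C([0,T];\HC))$; the limit $\phi(t)$ is continuous, adapted to the filtration of $B(t)$, norm-preserving, and satisfies \eqref{eqqufiBnonlin} in integral form (one passes to the limit using the $C$-bound to control the unbounded operators and the continuity of $\phi\mapsto\langle L_S\rangle_\phi$ on unit vectors in $Dom(C)$). The estimate \eqref{eqthMornon1} is inherited from the uniform bound on $\phi_m$. Strong uniqueness follows from the identical localized Gronwall argument applied to the difference of any two $C$-strong solutions on a common stochastic basis.
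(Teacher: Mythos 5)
Your proposal follows the paper's proof in all essentials: finite-dimensional nonlinear approximations $\phi_m$ in $\HC_m$ driven by the given $B(t)$ (so that strong adaptedness is automatic), a uniform bound $\E\|C\phi_m(t)\|^2\le e^{\al t}[\|C\phi_0\|^2+\al t(1+\be)]$, a localized Gronwall argument with stopping times controlling the $C$-norm to obtain the Cauchy property (the scalar terms $\langle L_S\rangle_{\phi}$ being exactly the non-Lipschitz obstruction you identify, since $\|L\phi\|^2\le \sqrt K\,\|\phi\|\,\|\phi\|_C$ by \eqref{MR0a}), and the same localization for uniqueness. The one genuinely different sub-step is the a priori $C$-bound: the paper obtains \eqref{eqstrongpure1} by Girsanov, transferring \eqref{eqthMor1} from the linear approximations $\chi_m=\phi_m\|\chi_m\|$ under the measure with density $\|\chi_m\|^{-2}$, whereas you apply It\^o directly to $\|C\phi_m\|^2$ and invoke the cancellation of all terms carrying the scalar $c=\langle L_S\rangle_{\phi_m}$ between the drift and the It\^o correction (expanding $(L_m-c)^*(L_m-c)$, the contributions $\pm 2c\, Re\,(CL_m\phi_m,C\phi_m)$ and $\mp c^2\|C\phi_m\|^2$ cancel exactly, leaving precisely the left-hand side of \eqref{MR2}); this is correct and makes the step self-contained, at the price of redoing the dissipativity computation that Theorem \ref{thMorRebLin} already encodes. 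One technical caveat: your stopping times depend on the pair $(k,m)$, so you need $\P(\tau_n<T)$ small uniformly in $k,m$; the advertised BDG bound on $\E\sup_t\|C\phi_m(t)\|^2$ is not immediate, because \eqref{MR2} controls only the combination of the three terms and not $\|CL_m\phi_m\|^2$ separately, which is what the quadratic variation of the martingale part of $\|C\phi_m\|^2$ requires. The clean fix is to observe that $e^{-\al t}\|C\phi_m(t)\|^2$ plus a suitable bounded correction is a nonnegative local supermartingale, hence a supermartingale, and to apply Doob's maximal inequality; alternatively, use the paper's single stopping time $\tau_n=\inf\{t:\sup_k\|\phi_k(t)\|_C\ge n\}$. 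Since $\|\phi_k-\phi_m\|\le 2$, the event $\{\tau_n<T\}$ contributes at most $4\,\P(\tau_n<T)$ to the Cauchy estimate, so your order of limits is then legitimate.
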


\begin{proof}
To shorten formulas we perform our argument only for self-adjoint operators $L$.
Generally one just has to work in the same way with equation  \eqref{eqqufiBnonlin}.


Step 1.
Recall that equation \eqref{MR2} (unlike \eqref{MR1}) can be written in the form of the quantum filtering equation 
\begin{equation}
\label{eqMRappr22}
\chi_m(t)=P_m\chi_0 +\int_0^t P_m (-iH_m-\frac12 L_m^2) \chi_m(s)ds 
+\int_0^t  L_m \chi_m(s) \, dY(s),
\end{equation}
with the operators $H_m=P_mHP_m$ and $L_m=P_mLP_m$ instead of $H$ and $L$. 
The corresponding nonlinear quantum filtering equation takes the form  
\begin{equation}
\label{eqMRapprnonl}
\phi_m(t)=P_m\phi_0 
+\int_0^t [-iH_m -\frac12 (L_m-\langle L_m\rangle_{\phi_m(s)})^2] \phi_m(s)ds 
+\int_0^t (L_m-\langle L_m \rangle_{\phi_m(s)}) \phi_m(s) \, dB(s).
\end{equation}

This finite-dimensional situation is well understood. Assuming $\chi_0=\phi_0$ 
with $\|\phi_0\|=1$, we know that equation 
\eqref{eqMRapprnonl} has the unique strong solution $\phi_m(t)$ on any interval $[0,T]$
such that $\|\phi(t)\|=1$ for all $t$. It is also known that (i) the process $\|\chi_m(t)\|^{-2}$ given by \eqref{chisquare4} is a positive martingale, (2) the processes $Y^m(t)=B(t)-2\int_0^t \langle L\rangle_{\phi_m(s)}ds$ 
and $\chi_m(t) =\phi_m(t) \|\chi_m(t)\|$ satisfy
equation \eqref{eqMRappr22}. Assuming that the BM $B(t)$ is defined on some stochastic 
basis $(\Om, \FC, \FC_t, \Q)$ with $\FC_t$ being the (augmented) filtration generated by $B$, it will follow (by Girsanov's theorem) that $Y^m(t)$ is a BM on the basis $(\om, \FC, \FC_t, \P_m)$, where the measure $P_m$ has the density $\|\chi_m\|^{-2}$ with respect to
$\Q$. We shall write $\E=\E_{\Q}$ for the expectation with respect to $\Q$.

Applying Girsanov's theorem and \eqref{eqthMor1}
for $\chi_m$, $Y_m$ and measure $\P_m$, yields the following bounds 
for the solutions $\phi_m(t)$:
\begin{equation}
\label{eqstrongpure1}
\E_{\Q} \,  \|C\phi(t)\|^2 \le e^{\al t} [\|C\phi_0\|^2+\al t(1+\be)].
\end{equation}

Step 2.
Following the line of arguments used for the linear equation, we aim to show that the sequence $\phi_m(t)$ converges to a solution of equation \eqref{eqqufiBnonlins}.

Subtracting equations for $k$ and $m$ we obtain (omitting argument $t$ for brevity)
\[
d(\phi_k-\phi_m)=-[i(H_k-H_m) +\frac12  (L_k^2-L_m^2)]\phi_k\, dt
+(L_k-L_m)\phi_k \, dB(t) 
\]
\[
-[i H_m +\frac12 L_m^2](\phi_k-\phi_m)\, dt
+L_m (\phi_k-\phi_m) \, dB(t)
\]
\[
+(L_k-L_m) \langle L_k\rangle_{\phi_k} \phi_k \, dt
+L_m ( \langle L_k\rangle_{\phi_k} -\langle L_m\rangle_{\phi_m})\phi_k \, dt
+L_m \langle L_m\rangle_{\phi_m} (\phi_k-\phi_m) \, dt
\]
\[
-\frac12 (\langle L_k\rangle_{\phi_k}^2 - \langle L_m\rangle_{\phi_m}^2)\phi_k\, dt
-\frac12 \langle L_m\rangle_{\phi_m}^2 (\phi_k-\phi_m) \, dt 
\]
\[
- (\langle L_k\rangle_{\phi_k} - \langle L_m\rangle_{\phi_m})\phi_k\, dB(t)
- \langle L_m\rangle_{\phi_m} (\phi_k-\phi_m) \, dB(t). 
\]

And therefore
\[
d \, \E \, (\phi_k-\phi_m)^2
=-2 \, \E \, Re \, (\phi_k-\phi_m, i(H_k-H_m) \phi_k) \, dt 
\]
\[
+\E \, Re \,  (\phi_k-\phi_m, (L_k^2-L_m^2)\phi_k) \, dt
+\E \, \|(L_k-L_m)\phi_k\|^2 \, dt 
\]
\begin{equation}
\label{eqthMor3a}
+2\, \E \, Re \, ((L_k-L_m) \phi_k, L_m (\phi_k-\phi_m)) \, dt 
\end{equation}
\[
+2 \, \E \, Re \left((L_k-L_m) \langle L_k\rangle_{\phi_k} \phi_k 
+L_m ( \langle L_k\rangle_{\phi_k} -\langle L_m\rangle_{\phi_m})\phi_k 
+L_m \langle L_m\rangle_{\phi_m} (\phi_k-\phi_m), \phi_k-\phi_m\right) \, dt
\]
\[
- \E \, Re \, \left((\langle L_k\rangle_{\phi_k}^2 - \langle L_m\rangle_{\phi_m}^2)\phi_k, \phi_k-\phi_m\right) \, dt 
+ \E \, \|(\langle L_k\rangle_{\phi_k} - \langle L_m\rangle_{\phi_m})\phi_k \|^2 \, dt. 
\]

The first four terms on the right hand side are the same as in linear case and are dealt with as in the liner case:  
they are bounded by terms tending to zero, as  $m,k \to \infty$. 
The same concerns other terms containing $L_k-L_m$, in particular, those arising
 by writing 
\[
\langle L_k\rangle_{\phi_k}-\langle L_m\rangle_{\phi_m}
=(L_k \phi_k, \phi_k-\phi_m)+(L_k (\phi_k-\phi_m),\phi_m)
+((L_k-L_m)\phi_m,\phi_m).
\]
However further we meet difficulties not present in the linear case: the terms of type 
$(L_m(\phi_k-\phi_m), \phi_k-\phi_m)$. In case of bounded $L$, these terms would be bounded 
by $\|\phi_k-\phi_m\|^2$ and direct application of Gronwall's inequality would complete a proof that the sequence $\phi_k$ is Cauchy. Here we need an additional argument.

Step 3. Let us introduce the stopping times 
\[
\tau_n=\inf \{t\le T: \sup_k \|\phi_k(t)\|_C \ge n\}
\]
(or $\tau_n=T$, if $ \sup_k \|\phi_k(t)\|_C < n$ for all $t\le T$).
Then we get from \eqref{eqthMor3a} and \eqref{eqthMor4} 
(and using Doob's optional sampling theorem) that 
\[
\E \, \|\phi_k-\phi_m\|^2(t\wedge \tau_n) 
\le R \bigl( \|(P_k-P_m)\phi(0)\|^2 +\frac{1}{\la_{m-l+1}}
+n^2 \int_0^t  \E \, \|\phi_k-\phi_m\|^2(s\wedge \tau_n) \, ds \bigr),
\]
with a constant $R$ (depending on $T$). Consequently, by Gronwall's lemma,
\[
\E \, \|\phi_k-\phi_m\|^2(t\wedge \tau_n) 
\le R \, e^{tRn^2} \bigl( \|(P_k-P_m)\phi(0)\|^2 +\frac{1}{\la_{m-l+1}}\bigr).
\]
Therefore the processes $\phi_k(t\wedge n)$ have a limit in the mean square sense. Since these limits are clearly consistent, we can denote all these limits by a single letter $\phi(t\wedge \tau_n)$ and to conclude that these processes satisfy the integral version of equation \eqref{eqqufiBnonlin}, where all upper bounds for the integrals are taken as $t\wedge \tau_n$ instead of $t$.

It remains to observe that $\lim_{n\to \infty}\tau_n=T$ a.s. In fact, otherwise, we would have a set of positive measure, where $\tau_n \ge K$ for all $n$ with some $K<T$, so that there $\sup_k \max_t \|\phi_k(t)\|_C=\infty$. Consequently, there would exist $k$ such that 
$\max_t \|\phi_k(t)\|_C=\infty$, which contradicts the continuity and boundedness of $\phi_k$.

Therefore, passing to the limit $n\to \infty$ we can conclude that $\phi(t)=\lim_{n\to\infty} \phi(t\wedge \tau_n)$ satisfies \eqref{eqqufiBnonlin}. 

Step 4. Finally, the uniqueness of solution is as easy as in the linear case. One writes down the equation for $\E_{\Q} \, \|\phi^1(t)-\phi^2(t)\|^2$ of two solutions with the same initial condition, which is similar to \eqref{eqthMor3a}, but is much simpler, 
because $H,L$ are fixed:
\[
\frac{d}{dt} \, \E \, 
\|\phi^1-\phi^2\|^2
= 2 \, \E \, Re \left( 
( \langle L\rangle_{\phi^1} -\langle L\rangle_{\phi^2}) L\phi^1 
+ \langle L\rangle_{\phi^2} L (\phi^1-\phi^2), \phi^1-\phi^2\right)
\]
\[
- \E \, Re \, \left((\langle L\rangle_{\phi^1}^2 - \langle L\rangle_{\phi^2}^2)\phi^1, \phi^1-\phi^2\right) 
+ \E \, \|(\langle L\rangle_{\phi^1} - \langle L\rangle_{\phi^2})\phi^1 \|^2 . 
\]

As above, we introduce stopping times 
\[
\tau_n=\min\{t\le T: \max(\|\phi^1(t)\|_C, \|\phi^2(t)\|_C)\ge n\}.
\]
Applying Gronwall's lemma we then conclude that 
$\phi^1(t\wedge \tau_n)=\phi^2(t\wedge \tau_n)$. Passing to the limit $n\to \infty$  
yields the equation $\phi^1(t)=\phi^2(t)$ for all $t$, as required.
\end{proof}

\section{Stochastic Lindblad (or quantum master) equations: linear version}
\label{seclineq}

The first mathematical problem with equation \eqref{Lindstoch},
that is, equation 
\[
d\ga(t)=-i[H,\ga(t)] \, dt +\LC_L \ga(t) \, dt +(L\ga(t)+\ga(t) L^*) dY(t),
\]
with
\[
\LC_L\ga =L\ga L^*-\frac12 L^*L\ga -\frac12 \ga L^*L
=L\ga L^*-\frac12 \{L^*L,\ga\},
\]
is that it is not obvious, in which space it should be considered. 
Since we are interested in trace-class operators, the most natural
space from physical point of view  would be the Banach space $\HC^1_S$ of self-adjoint
trace-class operators in $\HC$. However, the classes of Banach spaces,
for which a satisfactory extension of Ito stochastic calculus was developed,
namely the so-called UMD spaces, spaces of martingale type 2 and spaces with a smooth norm
(see review \cite{VanNeerven}) do not include $\HC^1$. 
In \cite{K} in was suggested to work with this equation in a larger Hilbert space 
$\HC^2$ of Hilbert-Schmidt operators or more precisely in its closed subspace $\HC^2_S$ of self-adjoint operators.  We shall follow this idea here, having in mind the additional difficulty arising for unbounded $L$: solution-processes may not be square integrable in the Hilbert-Schmidt class.
In fact, even for factorized solutions of type $\ga(t)=\chi(t)\otimes \bar \chi(t)$ with $\chi(t)$ solving the pure state linear filtering equation,
$\E \, {\tr} \, \ga(t)^2=\E \|\chi(t)\|^4$, which may not be finite generally.


A natural class for well-posedness for  equation \eqref{Lindstoch} in the setting of Hypothesis MR is the cone $\TC_C^{S+}(\HC)$ introduced in 
\cite{ChebQuez}: $\TC^{S+}_C(\HC)$ is the cone of trace-class positive operators $\ga$ in $\HC$ such that one of three equivalent properties holds: (i) $C \sqrt \ga \in \HC^2$ (including the statement that the image of $\sqrt \ga$ belongs to the domain of $C$), (ii) $ \sqrt \ga C\in \HC^2$,
(iii) $C\ga C$ (more precisely, its unique extension by continuity) is of trace class. The easiest way to see that these properties are equivalent
is to use the spectral basis of $C$, where it is diagonal with eigenvalues $\la_i$ on  the diagonal, since then
\[
{\tr} \, C\ga C=\sum_j \la_j^2 \ga_{jj}
=\sum_{j,k}\la_j^2 |\sqrt \ga_{jk}|^2
=\|C\sqrt \ga\|_{\HC^2} =\|\sqrt \ga C\|_{\HC^2}.
\]
The closed linear span $\TC_C^S(\HC)$ of the cone $\TC_C^{S+}(\HC)$ is a Banach space when equipped with the norm $\|\ga\|_C={\tr} \, (C |\ga|C)$. 

We shall call a continuous $\HC^2$-valued process $\ga(t)$ a $C$-strong solution of \eqref{Lindstoch}, if it is adapted to the filtration generated by $Y(t)$, satisfies \eqref{Lindstoch} in $\HC^2$  and enjoys the bound  
\[
\sup_{t\in [0,T]} \E \, \|\ga(t)\|_C  <\infty.
\]

Let us say that such a solution with an initial condition $\ga(0)$ {\it respects finite-dimensional approximations}
if it is an $L^2$-limit (in the space of $\HC^2$-valued random variables),
as $m\to \infty$, of finite dimensional approximations $\ga_m(t)$ defined as the solutions to the equations of type  \eqref{Lindstoch} in $\HC_m$, where the operators 
$L_m=P_mL P_m$ and $H_m=P_mHP_m$ are taken instead of $L$ and $H$, 
\begin{equation}
\label{Lindstochapp}
d\ga_m(t)=-i[H_m,\ga_m(t)] \, dt +\LC_{L_m} \ga_m(t) \, dt 
+(L_m\ga(t)+\ga(t) L_m^*) \, dY(t),
\end{equation}
considered with the initial condition $\ga_m(0)=P_m\ga(0)P_m$.

\begin{remark}
For the stochastic differential (and its integral) on the r.h.s. of \eqref{Lindstoch} to be well-defined it is needed that the integral
\begin{equation}
\label{eqItoNeed}    
\int_0^t \|L\ga(t)+\ga(t) L^*\|^2_{\HC^2} \, dt
= \int_0^t {\tr} (L\ga(t)+\ga(t) L^*)^2 \, dt
\end{equation}
is finite a.s. One can see that this condition is fulfilled under the assumptions of the definition of $C$-strong solution. In fact,
under these assumptions $L\ga(t)$ is an a.s. continuous curve both in $\HC^2$ and $\HC^1_S$ (because $\ga(t)$ is continuous and $L$ is a continuous operator from  $\TC_C^{S+}(\HC)$ to $\HC^1$, which holds by the closed graph theorem), hence its square norm is continuous a.s. and therefore integrable. For unbounded $L$ one can hardly expect to have a finite expectation of expression \eqref{eqItoNeed}, which complicates the arguments of analysis.
\end{remark}

\begin{theorem}
\label{LindStochLin1}
Under Hypothesis MR, for any $\ga_0\in \TC_C(HC)$ there 
exists a unique $C$-strong solution to equation \eqref{Lindstoch} with initial condition $\ga_0$, which (i) respects finite-dimensional approximations, (ii) is conservative in the sense that $\E \, {\tr} \, \ga(t)={\tr} \, \ga_0$ and (iii) satisfies the growth estimate 
 \begin{equation}
\label{Lindstochquadex1}
\E \|\ga(t)\|_C 
\le e^{\al t} [\|\ga_0\|_C +\al t ({\tr}\, \ga_0+\be)].
\end{equation}
Moreover, if $\ga_0$
is positive, then so is the solution $\ga(t)$, 
and ${\tr}\, \ga(t)$ is a positive martingale given by the equation
\begin{equation}
\label{Lindstochmart}
{\tr} \,\ga(t)={\tr} \, \ga_0+\int_0^t {\tr} (L\ga(s)+\ga(s) L^*) \, dY(s).
\end{equation}
Finally 
\[
\ga(t)-\int_0^t \LC_L(\ga(s))ds
\]
is a $\HC^2$- and $\HC^1$-valued martingale, and $\E \ga(t)$ solves  
the quantum stochastic master equation \eqref{eqquantmas}.
\end{theorem}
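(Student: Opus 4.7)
The plan is to mimic the finite-dimensional approximation strategy used in the proof of Theorem \ref{thMorRebLin}, now lifted to the Hilbert-Schmidt space $\HC^2$ and controlled by the stronger $C$-norm $\|\ga\|_C = \tr(C|\ga|C)$. The finite-dimensional approximations \eqref{Lindstochapp} are SDEs with bounded coefficients in $\HC_m$ and hence have unique strong solutions $\ga_m(t)$. Applying Ito to $\tr\ga_m$ and using the cyclic property of the trace together with $\tr\LC_{L_m}(\ga_m)=0$ yields $\tr\ga_m(t) = \tr P_m\ga_0 P_m + \int_0^t \tr(L_m\ga_m + \ga_m L_m^*)\, dY$, which is a martingale, giving conservativity at the approximate level. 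For the a priori $C$-norm bound, I would first settle the positive case by spectrally decomposing $P_m\ga_0 P_m = \sum_j \la_j^m e_j^m \otimes \bar e_j^m$; linearity of the finite-dimensional linear equation yields $\ga_m(t) = \sum_j \la_j^m \chi_j^m(t)\otimes\bar\chi_j^m(t)$, where each $\chi_j^m$ solves \eqref{eqMRappr2} started at $e_j^m$, whence $\|\ga_m(t)\|_C = \sum_j \la_j^m \|C\chi_j^m(t)\|^2$, so summing \eqref{eqthMor2} against $\la_j^m$ gives \eqref{Lindstochquadex1} for $\ga_m$. A non-positive $\ga_0\in\TC_C^S(\HC)$ is then handled by writing it as a difference of positive trace-class operators.

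Next I would prove that $\{\ga_m\}$ is Cauchy in $L^2(\Om;\HC^2)$. Subtracting the equations for indices $k>m$ and applying Ito to $\E\tr(\ga_k-\ga_m)^2$, the terms in which no factor $L_k-L_m$ or $H_k-H_m$ appears cancel in exactly the same manner as in \eqref{eqthMor3}, while the remaining error terms involve such differences acting on $\ga_k$ or $\ga_m$ and can be dominated via the operator estimate \eqref{eqthMor4} combined with the uniform $C$-norm bound of the previous step. Under Hypothesis A this produces explicit convergence of order $1/\sqrt{\la_{m-l+1}}$; under only Hypothesis MR one falls back on the weak compactness argument of \cite{MoraRebolin} with the additional tightness supplied by the $C$-norm bound. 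Passing to the limit in \eqref{Lindstochapp} (the stochastic integral is controlled by Ito isometry plus \eqref{eqthMor4}) shows that the limit $\ga(t)$, which is adapted and $\HC^2$-continuous, satisfies \eqref{Lindstoch}; the bound \eqref{Lindstochquadex1} passes through by Fatou's lemma.

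Uniqueness and the remaining properties follow fairly routinely. For uniqueness, if $\ga^1,\ga^2$ are two $C$-strong solutions with the same initial condition, Ito applied to $\|\ga^1-\ga^2\|^2_{\HC^2}$, combined with stopping times $\tau_n = \inf\{t:\max_i \|\ga^i(t)\|_C\ge n\}$ and Gronwall's lemma, forces $\ga^1 = \ga^2$ on $[0,t\wedge\tau_n]$; letting $n\to\infty$ yields global uniqueness. Conservativity $\E\tr\ga(t) = \tr\ga_0$ and equation \eqref{Lindstochmart} follow from Ito applied to $\tr\ga(t)$ using $\tr\LC_L(\ga)=0$; positivity of $\ga(t)$ (and hence of $\tr\ga(t)$) is inherited from the pure-state decomposition at the finite-dimensional level. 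The martingale property of $\ga(t)-\int_0^t \LC_L(\ga(s))\,ds$ in $\HC^2$ (and $\HC^1$) reduces to integrability of the stochastic integral $\int_0^t (L\ga+\ga L^*)\,dY$, which follows from \eqref{MR0a} combined with \eqref{Lindstochquadex1}. Finally, taking expectation in the integral form of \eqref{Lindstoch} and using the vanishing-mean property of the stochastic integral yields \eqref{eqquantmas} for $\E\ga(t)$.

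The main obstacle is the Cauchy step: verifying that all leading terms cancel in the Ito expansion of $\E\tr(\ga_k-\ga_m)^2$ is more delicate than in the pure-state analogue, since all three singular products $L_k^*L_k\ga$, $\ga L_k^*L_k$, $L_k\ga L_k^*$ must be expanded and reorganized using the cyclic property of the trace to ensure that each surviving term carries a $(L_k-L_m)$ or $(H_k-H_m)$ factor. The additional difficulty compared to the bounded case of \cite{K} is that one cannot directly bound $\|L\ga\|_{\HC^2}$ without invoking the $C$-norm control, so the estimates \eqref{eqthMor4} must be used in conjunction with the finite $C$-norm of $\ga_m$ rather than the Hilbert-Schmidt norm alone.
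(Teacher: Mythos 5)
Your overall strategy (finite-dimensional approximation in $\HC^2$) is the natural first guess, but the central step --- proving that $\{\ga_m\}$ is Cauchy in $L^2(\Om;\HC^2)$ by applying Ito to $\E\,{\tr}\,(\ga_k-\ga_m)^2$ --- runs into an obstruction that the paper explicitly flags and is designed to avoid. For factorized solutions $\E\,{\tr}\,\ga(t)^2=\E\|\chi(t)\|^4$, and the paper points out (see the discussion opening Section \ref{seclineq}, and Proposition \ref{propsmoo}, where $\E\|U_t\|^p<\infty$ holds exactly for $p<2$) that such fourth moments need not be finite for unbounded $L$. Concretely, the error terms in your Ito expansion, e.g. $\E\,{\tr}\bigl(((L_k-L_m)\ga_k)^2\bigr)$, are quadratic in $\ga_k$ and would have to be controlled by second moments of $\|\ga_k\|_C$ (equivalently fourth moments of $\|C\chi\|$), whereas the a priori estimate \eqref{Lindstochquadex1} and its finite-dimensional analogue control only the first moment $\E\|\ga_m\|_C$; the bounds available for fixed $m$ blow up as $m\to\infty$. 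The same issue undermines the appeal to the Ito isometry when passing to the limit in the stochastic integral: the paper remarks that $\E\int_0^t\|L\ga+\ga L^*\|^2_{\HC^2}\,dt$ can hardly be expected to be finite. The paper's actual route is different: existence is obtained by decomposing the positive initial condition as $\ga_0=\sum_k p_k e_{k0}\otimes\bar e_{k0}$ and defining $\ga(t)=\sum_k p_k e_k(t)\otimes\bar e_k(t)$ directly from the pure-state solutions of Theorem \ref{thMorRebLin}; the convergence of the finite-dimensional approximations is then deduced, for decomposable data, from the $L^2$-convergence $\chi_m\to\chi$ of the pure-state approximations together with the elementary inequality ${\tr}\,|\xi_1\otimes\bar\xi_1-\xi_2\otimes\bar\xi_2|\le(\|\xi_1\|+\|\xi_2\|)\,\|\xi_1-\xi_2\|$, and it holds only in expectation ($L^1$ in $\Om$), not in mean square.

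Your uniqueness argument also overreaches. The theorem asserts uniqueness only within the class of solutions respecting finite-dimensional approximations, for which uniqueness is automatic from the well-posedness of the finite-dimensional equations; unrestricted uniqueness is the subject of Theorem \ref{LindStochLin2}, requires the additional Hypothesis A, and is proved there by a Duhamel-principle comparison of $P_m\tilde\ga P_m$ with $\ga_m$ plus stopping times --- not by Gronwall applied to $\|\ga^1-\ga^2\|^2_{\HC^2}$. A Gronwall bound on that quantity would again need to dominate terms like ${\tr}\bigl(L(\ga^1-\ga^2)L^*(\ga^1-\ga^2)\bigr)$ by a constant times $\|\ga^1-\ga^2\|^2_{\HC^2}$, which fails for unbounded $L$ and is not repaired by stopping on $\|\ga^i\|_C$.
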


\begin{remark}
Since all reasonable solutions are built from finite-dimensional approximations, our condition that solutions respect finite dimensional approximation seems to be satisfactory
for any application. Moreover, it implies the Markov property for the solution and the semigroup (propagator) property for resolving operators. 
Aesthetically however, it would be nice to prove uniqueness without this restriction. That is, what we will do in the next theorem under additional Hypothesis A.  
\end{remark}

\begin{proof}

Step 1. 

Since finite-dimensional equations are  well-posed (see e.g. \cite{K}), it follows that
the requirement of the theorem automatically implies uniqueness. However, the convergence of finite-dimensional approximations does not carry out in direct analogy with the case of pure states.  Instead, we prove it as a consequence of the finite-dimensional case.

By linearity, for the proof of existence, it is sufficient to look at positive initial conditions $\ga_0$. Being of trace class, $\ga_0$ is also a Hilbert-Schmidt operator.
Therefore there exists a orthonormal basis $\{e_{k0}\}$ in $\HC$ 
such that $\ga_0$ can be presented as a convergent series (both in $\HC^1_S$ and $\HC^2_S$) 
 \[
 \ga_0=\sum_{k=1}^{\infty} p_k e_{k0}\otimes \bar e_{k0},
 \]
 with a summable non-negative sequence $\{p_k\}$.
 Clearly, if $C_{jk}$ are matrix elements of $C$ in the basis $\{e_k\}$, then 
 \[
 \|\ga_0\|_C=\sum_{j,k}|C_{jk}|p_k,
 \]
 so that the series converges also in $\TC^S_C(\HC)$. 
 Hence a solution to  equation \eqref{Lindstoch} can be written explicitly 
 as the monotone convergent series (in the sense of $L^1$-convergence of $\HC^1_S$-valued or  $\HC^2_S$-valued random variables) 
 \[
 \ga(t)=\sum_{k=1}^{\infty} p_k e_{k}(t)\otimes \bar e_{k}(t),
 \]
 with $e_{k}(t)$ solving the corresponding pure state
 equation according to Theorem \eqref{thMorRebLin}. 

Step 2. 

From the discussion above, it is clear that it is sufficient to prove the convergence of finite-dimensional approximations for decomposable initial conditions 
$\ga(0)=\chi(0)\otimes \bar \chi(0)$. Clearly,
$P_m \ga(0)P_m=P_m\chi \otimes \overline{P_m\chi}$. Let $\chi_m(t)$ be the solution of 
the approximate pure state filtering equations of type \eqref{eqqufiBlinint} described in Theorem \ref{thMorRebLin}. 
By this theorem, $\chi_m(t)$ converge (in the Hilbert space of square integrable 
$\HC$-valued random variables), as $m\to \infty$, to the solution   
of equation \eqref{eqqufiBlinint} with the initial condition $\chi(0)$.
Since
\[
{\tr}\, |\xi_1\otimes \bar \xi_1-\xi_2 \otimes \bar \xi_2|
\le (\|\xi_1\|+\xi_2\|)\|\xi_1-\xi_2\|
\]
for any two vectors $\xi_1, \xi_2$, it follows that the corresponding solutions 
$\ga_m(t)=\chi_m(t)\otimes \overline{\chi_m(t)}$ of equation \eqref{Lindstochapp}
converge to the solution $\ga(t)=\chi(t)\otimes \overline{\chi(t)}$ in expectation (not necessarily in the mean squared!) in either of spaces of $\HC^1$- or $\HC^2$-valued random variables.

 All required properties of solutions follow automatically from finite-dimensional approximations, where they are known to hold (and easy to obtain).
\end{proof}

\begin{theorem}
\label{LindStochLin2}
Under Hypothesis MR and A the solution $\ga(t)$ of the previous theorem is unique without the restriction of respecting finite-dimensional approximations. 
\end{theorem}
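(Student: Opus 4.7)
The plan is to show that every $C$-strong solution $\ga$ of \eqref{Lindstoch} with initial condition $\ga_0$ must coincide with the distinguished solution $\eta$ constructed in Theorem \ref{LindStochLin1}. Since $\eta$ is by construction the $L^2(\Om;\HC^2)$-limit of its finite-dimensional approximations $\eta_m$ solving \eqref{Lindstochapp}, it suffices to prove that $\eta_m\to\ga$ in the same sense as $m\to\infty$. This parallels the strategy used in the existence part of Theorem \ref{thMorRebLin} for the linear pure-state equation.

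\textbf{Ito computation and the key cancellation.} I would set $\De_m=\ga-\eta_m$, so that $\De_m(0)=\ga_0-P_m\ga_0P_m\to 0$ in $\HC^2$, and decompose each bilinear expression into an ``$L_m$-part'' and an ``approximation error part'', for example
\[
L\ga+\ga L^*-L_m\eta_m-\eta_m L_m^*=(L_m\De_m+\De_m L_m^*)+(L-L_m)\ga+\ga(L^*-L_m^*),
\]
and analogously for the drift. Applying Ito's formula to $\tr(\De_m^2)=\|\De_m\|^2_{\HC^2}$, the commutator $-i[H_m,\De_m]$ contributes zero by cyclicity of the trace; the $L_m$-part of the drift $\LC_{L_m}\De_m$ produces $-2\tr(L_m^*L_m\De_m^2)$, which is exactly cancelled by the $+2\tr(L_m^*L_m\De_m^2)$ coming from the quadratic variation of the martingale $(L_m\De_m+\De_m L_m^*)\,dY$. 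What remains from the main part is
\[
2\,\tr(\De_m L_m\De_m L_m^*)+2\,\mathrm{Re}\,\tr((L_m\De_m)^2),
\]
the operator-valued analog of the cancellation that made $\|\chi\|^2$ a martingale in Theorem \ref{thMorRebLin}. The approximation-error terms each carry a factor $L-L_m$ or $H-H_m$ and contribute $O(\la_{m-l+1}^{-1/2})$ by Hypothesis A, after lifting \eqref{eqthMor4} to the Hilbert-Schmidt setting via summation over an orthonormal basis.

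\textbf{Stopping times and Gronwall.} I would introduce stopping times $\tau_n=\inf\{t\le T:\|\ga(t)\|_C\vee\sup_k\|\eta_k(t)\|_C>n\}$, on which both $\|C\ga\|_{\HC^2}$ and $\|C\eta_m\|_{\HC^2}$ are controlled pathwise via the inequality $\|C\rho\|^2_{\HC^2}\le(\tr\,\rho)\|\rho\|_C$ valid for positive $\rho$. Combining the Ito identity with Cauchy-Schwarz in $\HC^2$ and Doob's optional sampling theorem, the main terms are bounded by $C_n\|\De_m\|^2_{\HC^2}$ and the error terms by $R_n\la_{m-l+1}^{-1/2}$. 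Gronwall's lemma then yields
\[
\E\|\De_m(t\wedge\tau_n)\|^2_{\HC^2}\le R_n\bigl(\|\De_m(0)\|^2_{\HC^2}+\la_{m-l+1}^{-1/2}\bigr)e^{C_nT}\longrightarrow 0\quad(m\to\infty).
\]
Finally, $\tau_n\to T$ a.s.\ as $n\to\infty$ by the estimate \eqref{Lindstochquadex1} for $\eta_m$ together with the $C$-strong-solution bound $\sup_t\E\|\ga(t)\|_C<\infty$ (via a Markov-type argument applied to $\sup_k\|\eta_k\|_C$ and the continuity of paths), forcing $\ga=\eta$.

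\textbf{Main obstacle.} The most delicate step is closing the Gronwall estimate for the main post-cancellation terms by $C_n\|\De_m\|^2_{\HC^2}$ rather than by an expression of the form $C(\|\De_m\|^2_{\HC^2}+\mathrm{const}_n)$ that would survive the limit $m\to\infty$ and ruin the argument. This requires exploiting the commutativity $CP_m=P_mC$ to transfer the $C$-norm bound from $\ga$ through the projection into an estimate on $L_m\De_m$, together with the operator inequality $\De_m^2\le 2(\ga^2+\eta_m^2)$, and is the principal point where the mixed-state linear analysis genuinely departs from its pure-state analog of Theorem \ref{strongpure}, in which the $\|L\phi\|^2$-type terms cancel more completely.
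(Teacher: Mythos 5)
You and the paper diverge at the very first step, and the divergence is where the gap lies. You form the difference $\De_m=\ga-\eta_m$ between the unknown infinite-dimensional solution and the finite-dimensional approximations and run Ito plus Gronwall on $\E\,{\tr}\,\De_m^2$, i.e.\ you transplant the pure-state scheme of Theorem \ref{thMorRebLin}. The paper explicitly warns (in the proof of Theorem \ref{LindStochLin1}) that the convergence of finite-dimensional approximations ``does not carry out in direct analogy with the case of pure states'', and the reason is exactly the residue you compute: after the cancellation of the $\pm 2\,{\tr}(L_m^*L_m\De_m^2)$ terms one is left with $2\,{\tr}(\De_m L_m\De_m L_m^*)+2\,\mathrm{Re}\,{\tr}((L_m\De_m)^2)$, which, unlike its pure-state analogue, does not vanish. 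These terms are controlled only through $\|L_m\De_m\|_{\HC^2}^2$, and for unbounded $L$ there is no $m$-independent constant $C_n$ with $\|L_m\De_m\|_{\HC^2}\le C_n^{1/2}\|\De_m\|_{\HC^2}$, even on the stopped event, since $\|L_m\|$ grows with $\la_m$. The tools you invoke ($CP_m=P_mC$ and $\De_m^2\le 2(\ga^2+\eta_m^2)$) give at best ${\tr}(L_m\De_m^2L_m^*)\le 2\,{\tr}(L_m\ga^2L_m^*)+2\,{\tr}(L_m\eta_m^2L_m^*)\le C_n$ on $\{t\le\tau_n\}$ --- an additive constant, not a multiple of $\|\De_m\|^2_{\HC^2}$ --- so Gronwall yields $\E\|\De_m(t\wedge\tau_n)\|^2_{\HC^2}\le e^{C_nT}(\|\De_m(0)\|^2_{\HC^2}+\la_{m-l+1}^{-1/2}+C_nT)$, which does not tend to $0$. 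This is precisely the failure mode you name as the ``main obstacle'', and the proposal does not actually overcome it. A further warning sign: the section opens by noting that $\E\,{\tr}\,\ga(t)^2=\E\|\chi(t)\|^4$ can be infinite (cf.\ Proposition \ref{propsmoo}), so the quantity you are estimating is structurally the wrong one for this equation.

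The paper's actual argument avoids the difference process altogether. Given an arbitrary $C$-strong solution $\tilde\ga$, it compresses it, setting $\tilde\ga_m=P_m\tilde\ga P_m$, and shows by direct computation that $\tilde\ga_m$ satisfies the finite-dimensional equation \eqref{Lindstochapp} exactly, up to an additive correction $I=I_1+I_2+I_3$ every term of which carries a factor $(1-P_m)$ sandwiched against $H$ or $L$. Duhamel's principle with the finite-dimensional propagator $\Phi_m^{t,s}$ gives $\tilde\ga_m(t)-\ga_m(t)=\int_0^t\Phi_m^{t,s}\,dI(s)$, and the corrections are killed in the trace norm via the operator version \eqref{eqthMor4ope} of estimate \eqref{eqthMor4} (this is where Hypothesis A enters), with a stopping-time argument needed only for the stochastic term $I_3$. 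No Gronwall bound on a Hilbert--Schmidt square is required, and the troublesome terms ${\tr}(\De_m L_m\De_m L_m^*)$ never appear. To salvage your route you would need a genuinely new mechanism to absorb $2\,{\tr}(\De_m L_m\De_m L_m^*)+2\,\mathrm{Re}\,{\tr}((L_m\De_m)^2)$; as written, that step is a gap, not merely a delicacy.
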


\begin{proof}

Let $\tilde \ga(t)$ be any solution to  \eqref{Lindstoch} with the initial condition $\ga(0)$. Let  us compare its approximations $\tilde \ga_m(t)=P_m\ga(t)P_m$ with the approximations $\ga_m(t)$ to the solution $\ga(t)$ built in the previous theorem.
 If we show that $\|\ga_m(t)-\tilde \ga_m(t)\| \to 0$, we are done.

We see that $\tilde \ga_m(t)$ satisfy the equations (omitting variable $t$ for brevity)
\[
d\tilde \ga_m=-i(P_mH\ga P_m-P_m\ga H P_m) \, dt 
+(P_mL \ga L^*P_m-\frac12 P_m L^*L \ga P_m-\frac12 P_m  \ga L^*L) \, dt 
\]
\[
+(P_mL\ga P_m+P_m\ga L^* P_m) \, dY(t).
\]
Writing 
\[
P_mH\ga P_m-P_m\ga H P_m
=H_m \tilde \ga_m -\tilde \ga_m H_m 
+P_m H(1-P_m)\ga P_m- P_m\ga (1-P_m)H P_m,
\]
and 
\[
P_mL\ga L^*P_m=L_m\tilde \ga_m L^*_m
\]
\[
+P_mL(1-P_m)\ga P_m L^*P_m 
+P_m L P_m \ga (1-P_m) L^* P_m+P_m L(1-P_m)\ga (1-P_m) L^*P_m,
\]
\[
 P_m L^*L \ga P_m+P_m  \ga L^*L P_m
 =L_m^*L_m \tilde \ga_m +\tilde \ga_m L_m^*L_m 
\]
\[
+P_mL^*(1-P_m) LP_m \ga P_m +P_m L^* P_m L(1-P_m) \ga P_m 
+P_m L^*(1-P_m)L (1-P_m) \ga P_m 
\]
\[
+P_m \ga (1-P_m) L^* P_m L P_m +P_m \ga P_m L^*(1-P_m) L P_m 
+P_m \ga (1-P_m)L^* (1-P_m) L P_m ,
\]
we see that $\tilde \ga_m$ satisfies equation \eqref{Lindstochapp} up to an additive correction $I=I_1+I_2+I_3$ with $I_1(0)=I_2(0)=I_3(0)=0$ and
\[
dI_1=-i\bigl(P_m H(1-P_m)\ga P_m- P_m\ga (1-P_m)H P_m\bigr) \, dt,
\]
\[
dI_2=\bigl(P_mL(1-P_m)\ga P_m L^*P_m 
+P_m L P_m \ga (1-P_m) L^* P_m+P_m L(1-P_m)\ga (1-P_m) L^*P_m\bigr)\, dt
\]
\[
-\frac12 \bigl( P_mL^*(1-P_m) LP_m \ga P_m +P_m L^* P_m L(1-P_m) \ga P_m 
+P_m L^*(1-P_m)L (1-P_m) \ga P_m\bigr) \, dt 
\]
\[
-\frac12 \bigl( P_m \ga (1-P_m) L^* P_m L P_m +P_m \ga P_m L^*(1-P_m) L P_m 
+P_m \ga (1-P_m)L^* (1-P_m) L P_m\bigr) \, dt,
\]
\[
dI_3=\bigl(P_m L(1-P_m)\ga P_m +P_m \ga (1-P_m) L^* P_m\bigr) \, dY.
\]
Therefore $\tilde \ga_m-\ga_m$ has initial condition zero and also satisfies \eqref{Lindstochapp} up to the additive correction $I$.  

Let $\Phi_m^{t,s}\, \ga$, $t\ge s$, denote the solution to equation \eqref{Lindstochapp} with
the initial condition at time $s$. As everything is well defined in finite-dimensional setting, we can write, by the Duhamel principle that 
\[
\tilde \ga_m(t)-\ga_m(t)=\int_0^t \Phi_m^{t,s}\, dI(s) .
\]
We want to show that this integral tends to zero, as $m\to \infty$.
To this end we shall use the following consequence of estimates 
\eqref{eqthMor4}: under he assumptions of these estimates, we have
\begin{equation}
\label{eqthMor4ope}
{\tr} \, |(A-A_m) \ga|\le  \frac{2\sqrt {2R}}{\sqrt {\la_{m-l+1}}}
\, {\tr}\, (C\ga C),
\end{equation}
for any positive $\ga$. In fact, as above, any positive $\ga$ can be written as
$\ga=\sum_{k=1}^{\infty} p_k \xi_k\otimes \bar \xi_k$, so that 
\[
{\tr} \, (C\ga C)=\sum_k p_k \|C\xi_k\|^2.
\]
On the other hand, for any operator $X$ and any vector $\xi$,
\[
{\tr} \, |X(\xi\otimes \bar \xi)|={\tr} \, |(\xi\otimes \bar \xi)X|
=\|\xi\| \, \|X\xi\|.
\]
Therefore,
\[
{\tr} \, |(A-A_m) \ga| \le \sum_k p_k \, {\tr} |(A-A_m)(\xi_k\otimes \bar \xi_k)|
=\sum_k p_k \|(A-A_m)(\xi_k)\| \, \|\xi_k\| 
\]
\[
\le \sum_k p_k   \frac{\sqrt {2R}}{\sqrt {\la_{m-l+1}}}\|\xi_k\|_C \|\xi_k\|
\le  \frac{2\sqrt {2R}}{\sqrt {\la_{m-l+1}}} \|C\xi_k\|^2,
\]
implying \eqref{eqthMor4ope}.

It follows that 
\[
{\tr}\, \bigl|P_m H(1-P_m)\ga(t) P_m- P_m\ga(t) (1-P_m)H P_m\bigr|
\]
\[
\le {\tr} \, |(H-H_m)\ga(t)|+\, {\tr} \, |\ga(t) (H-H_m)|
\le  \frac{4\sqrt {2R}}{\sqrt {\la_{m-l+1}}}
\, {\tr}\, (C\ga(t) C),
\]
which tends to zero uniformly in $t\in [0,T]$ according to the definition 
of a strong $C$-solution $\ga(t)$.

Therefore,
\[
\E \, {\tr} \, \bigl| \int_0^t \Phi_m^{t,s}\, dI_1(s) \bigr| \to 0,
\]
as $m\to \infty$. 

Quite similarly one shows that 
\[
\E \, {\tr} \, \bigl| \int_0^t \Phi_m^{t,s}\, dI_2(s) \bigr| \to 0,
\]
as $m\to \infty$. 

The most difficult case is $I_3$, which requires the usual trick with stopping times. Namely,  let $\tau_n$ be the stopping time when $\|L\ga(t)\|_{\HC^1}$ 
reaches the level $n$ (alternatively, one can use as well the time when $\|L\ga(t)\|$
reaches level $n$). Then
\[
\E \, \|I_3(t\wedge \tau_n)\|^2\le 4n\int_0^t \E \|(L-L_m)\ga(s) \|_{\HC^1} \, ds 
\le \frac{\kappa n}{\sqrt{\la_{m-l+1}}} \, {\tr}\, \int_0^t (C\ga (s) C)
\]
with a constant $\kappa$. It follows that 
\[
\tilde \ga(t\wedge \tau_n)=\ga (t\wedge \tau_n),
\]
for any $n$. It remains to observe that $\tau_n\to \infty$, 
as $n\to \infty$ ($\tau_n$ becomes greater than any fixed $T$), a.s. In fact, otherwise, we would have that $\sup_{t\in [0,T]} {\tr}\, |L\ga(t)|=\infty$ on a set of positive measure, which is impossible, because the curve ${\tr} \, |L\ga(t)|$ is continuous and bounded a.s. 
 Therefore, passing to the limit $n\to \infty$, we get $\tilde \ga(t)=\ga(t)$.
 \end{proof}

 As a direct consequence we obtain the well-posedness for the master equation \eqref{eqquantmas}.

\begin{corollary}
Under Hypotheses MR and A, for any $\ga_0\in \TC_C(\HC)$ there 
exists a unique $C$-strong solution to equation \eqref{eqquantmas} with initial condition $\ga_0$. This solution is conservative (in the sense that it preserves the trace) and satisfies the growth estimate 
 \begin{equation}
\label{Lindstochquadex1cor}
 \|\ga(t)\|_C 
\le e^{\al t} [\|\ga_0\|_C +\al t ({\tr}\, \ga_0+\be)],
\end{equation}
Moreover, if $\ga_0$
is positive, then so is the solution $\ga(t)$. 
\end{corollary}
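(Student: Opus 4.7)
The plan is to derive both existence and the stated properties by averaging the $C$-strong stochastic solution provided by Theorem \ref{LindStochLin2}. Let $\ga^s(t)$ denote the $C$-strong solution to \eqref{Lindstoch} with initial condition $\ga_0$, and set $\hat\ga(t)=\E\,\ga^s(t)$. Since the operation $\E$ commutes with the bounded operators $[H,\cdot]$ and $\LC_L$ appearing on the right hand side (once these are understood as continuous maps $\TC^S_C(\HC)\to\HC^1$ by the closed graph theorem, as noted in the remark following the definition of $C$-strong solution), it suffices to show that the stochastic integral has vanishing expectation in order to obtain \eqref{eqquantmas} for $\hat\ga(t)$ from the integral form of \eqref{Lindstoch}.

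First I would introduce the stopping times $\tau_n=\inf\{t:\|L\ga^s(t)\|_{\HC^1}\ge n\}$, which tend to $T$ a.s.\ by the continuity argument used in the proof of Theorem \ref{LindStochLin2}. Up to $\tau_n$ the stochastic integral is a genuine martingale and hence $\E\int_0^{t\wedge\tau_n}(L\ga^s+\ga^s L^*)\,dY=0$. Taking expectations in \eqref{Lindstoch} stopped at $t\wedge\tau_n$, then letting $n\to\infty$ and invoking dominated convergence based on the uniform bound \eqref{Lindstochquadex1} (which, via the estimate $\|L\ga\|_{\HC^1}\le \text{const}\cdot\|\ga\|_C$ coming from Hypothesis MR), produces the identity
\[
\hat\ga(t)=\ga_0-\int_0^t i[H,\hat\ga(s)]\,ds+\int_0^t \LC_L\hat\ga(s)\,ds,
\]
that is \eqref{eqquantmas}. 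The growth estimate \eqref{Lindstochquadex1cor} is then immediate from Jensen's inequality applied to the convex functional $\|\cdot\|_C$ combined with \eqref{Lindstochquadex1}. Conservation follows from $\tr\,\hat\ga(t)=\E\,\tr\,\ga^s(t)=\tr\,\ga_0$, and positivity of $\hat\ga$ from that of $\ga^s$.

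For uniqueness I would not use the stochastic theory directly, since a deterministic Lindblad solution need not correspond to a stochastic one. Instead I would mirror the argument of Theorem \ref{LindStochLin2} in its much simpler deterministic version: given two $C$-strong solutions $\ga_1,\ga_2$ with the same initial datum, consider their finite-dimensional projections $P_m\ga_i P_m$. These satisfy the finite-dimensional Lindblad equation up to an additive correction $I_1(t)+I_2(t)$ (the terms $I_3$ involving $dY$ disappear). The operator estimate \eqref{eqthMor4ope}, which is available under Hypothesis A, shows that $\tr\,|I_j(t)|\to 0$ uniformly in $t\in[0,T]$ as $m\to\infty$. A Duhamel representation for the difference together with Gronwall then gives $\ga_1=\ga_2$.

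The main obstacle is the martingale step in the existence argument: because of unboundedness of $L$ we cannot a priori guarantee that \eqref{eqItoNeed} is integrable (this is flagged explicitly in the remark preceding Theorem \ref{LindStochLin1}), so the zero-expectation of the stochastic integral is not automatic and has to be obtained by the stopping-time localization described above, combined with the $C$-uniform moment bound. Once this point is handled, every other property transfers cleanly from the stochastic theorem by the convexity of $\|\cdot\|_C$, linearity of the trace, and positivity-preservation of $\E$.
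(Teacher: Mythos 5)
Your proposal is correct and follows essentially the same route as the paper: existence and the listed properties are obtained by taking expectations of the $C$-strong solution of the stochastic equation from Theorem \ref{LindStochLin1} (which already asserts that $\E\,\ga(t)$ solves \eqref{eqquantmas}), and uniqueness is obtained by repeating the finite-dimensional projection/Duhamel argument of Theorem \ref{LindStochLin2}, simplified by the absence of the stochastic correction term $I_3$. Your added detail on localizing the stochastic integral by stopping times is a reasonable elaboration of a step the paper leaves implicit.
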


In fact, existence was obtained by passing in Theorem  \ref{LindStochLin1}, and uniqueness is obtained by exactly the same argument as in Theorem \ref{LindStochLin2}, though simpler, as the correction $I$ does not contain the most annoying term $I_3$.

This corollary is of course well known, but we include it here as a consequence of our results for completeness, because it was achieved with many efforts of different authors by constantly improved and simplified arguments, see e.g.
\cite{Cheb89}, \cite{ChebFagn}, \cite{ChebQuez}, \cite{Mora13}, \cite{MoraRebo} and references therein.

To complete this section let us formulate the result on the continuous dependence of the solutions to  equation \eqref{Lindstoch} on the Hamiltonian $H$. Its proof is a straightforward application of the perturbation technique and will be, therefore, omitted.

\begin{theorem}
\label{LindStochLin3} 
Consider two equations of type \eqref{Lindstoch} with one and the same $L$, but with two different Hamiltonian $H_1$, $H_2$, both satisfying assumptions of Theorem \ref{LindStochLin1} and such that $H_1-H_2$ is a bounded operator. Then, for the corresponding solutions $\ga_1(t)$, $\ga_2(t)$, with the same initial condition 
$\ga(0)\in \TC_C^{S+}(\HC)$,
we have the following estimate:
\begin{equation}
\label{eqLindstochLin21}
\E \, {\tr}\, |\ga_1(t)-\ga_2(t)|\le 2 t \, \|H_2-H_1\| \, {\tr} \, \ga_0,
\end{equation}
\end{theorem}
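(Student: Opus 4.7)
The plan is to carry out a Duhamel perturbation argument at the level of the finite-dimensional truncations used in the proof of Theorem \ref{LindStochLin1}, and to pass the resulting trace-norm estimate to the limit, thereby bypassing any direct manipulation of a stochastic flow in the unbounded setting. Writing $\ga_{i,m}(t)$, $i=1,2$, for the truncated solutions of \eqref{Lindstochapp} with $H_{i,m}=P_m H_i P_m$, $L_m=P_m L P_m$ and initial datum $P_m\ga_0 P_m$, subtraction of the two equations gives an inhomogeneous linear stochastic master equation in $\HC_m$ with Hamiltonian $H_{1,m}$, coupling $L_m$, vanishing initial condition and deterministic-in-operator forcing $-i[H_{1,m}-H_{2,m},\ga_{2,m}(t)]\,dt$. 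Since everything is finite-dimensional, Duhamel's formula applies: if $\Phi_{1,m}^{t,s}$ denotes the random resolving operator of the homogeneous equation in $\HC_m$, then
\begin{equation*}
(\ga_{1,m}-\ga_{2,m})(t)=-i\int_0^t \Phi_{1,m}^{t,s}\bigl([H_{1,m}-H_{2,m},\,\ga_{2,m}(s)]\bigr)\,ds.
\end{equation*}

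Next I would estimate the trace norm using two ingredients. First, the self-adjoint operator $-i[H_{1,m}-H_{2,m},\ga_{2,m}(s)]$ admits the Jordan decomposition $\Delta^+(s)-\Delta^-(s)$ with positive summands satisfying
\begin{equation*}
{\tr}\,\Delta^+(s)+{\tr}\,\Delta^-(s)={\tr}\,|[H_{1,m}-H_{2,m},\ga_{2,m}(s)]|\le 2\|H_1-H_2\|\,{\tr}\,\ga_{2,m}(s).
\end{equation*}
Second, in finite dimensions $\Phi_{1,m}^{t,s}$ is positivity preserving and depends only on the increments of $Y$ on $[s,t]$; by the martingale conservation at the truncated level (the analogue of Theorem \ref{LindStochLin1}(ii)), one has $\E[{\tr}\,\Phi_{1,m}^{t,s}(\eta)\mid\FC_s]={\tr}\,\eta$ for any positive $\FC_s$-measurable $\eta\in\HC_m$. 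Combining these, taking expectation, and using $\E\,{\tr}\,\ga_{2,m}(s)={\tr}\,P_m\ga_0 P_m\le{\tr}\,\ga_0$, I obtain
\begin{equation*}
\E\,{\tr}\,|\ga_{1,m}(t)-\ga_{2,m}(t)|\le 2t\,\|H_2-H_1\|\,{\tr}\,\ga_0,
\end{equation*}
with a constant uniform in $m$.

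Finally I would let $m\to\infty$. By Theorem \ref{LindStochLin1}, $\E\,{\tr}\,|\ga_{i,m}(t)-\ga_i(t)|\to 0$ for $i=1,2$, so the triangle inequality in the norm $\E\,{\tr}\,|\cdot|$ transmits the bound, yielding \eqref{eqLindstochLin21}. The main, and rather modest, obstacle is the justification that the truncated flow $\Phi_{1,m}^{t,s}$ indeed satisfies the positivity-preserving and trace-martingale properties when applied to a random $\FC_s$-measurable initial condition; this reduces, via independence of $\Phi_{1,m}^{t,s}$ from $\FC_s$ and conditioning, to the standard deterministic-initial-condition case, which is classical for finite-dimensional stochastic Lindblad equations.
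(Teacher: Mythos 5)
Your argument is correct and is precisely the ``straightforward application of the perturbation technique'' that the paper invokes while omitting the details: a Duhamel representation of the difference via the flow $\Phi_{1,m}^{t,s}$ at the truncated level (the same device used in the proof of Theorem \ref{LindStochLin2}), combined with positivity preservation and the trace-martingale property of the finite-dimensional flow to get the uniform bound $2t\|H_2-H_1\|\,{\tr}\,\ga_0$, and then the $L^1$-convergence of the approximations from Theorem \ref{LindStochLin1} to pass to the limit. Since the paper supplies no written proof, your write-up is a faithful and complete realization of the intended argument.
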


\section{Stochastic Lindblad (or quantum master) equations: normalized version}
\label{secnormeq}

$C$-solutions for the nonlinear quantum filtering equation are defined analogously to the linear case. For a BM $B(t)$ on a probability space $(\Om, \FC, \Q)$, a continuous $\HC^2$-valued process $\rho(t)$ is called a $C$-{\it strong solution} of \eqref{Lindstochnorm1} with a positive initial condition of unit trace, if it is adapted to the filtration generated by $B(t)$, satisfies \eqref{Lindstochnorm1} in $\HC^2$, preserves the trace, has unit trace for all $t$ a.s., and is uniformly bounded in $\TC_C^S(\HC)$:  
\[
\sup_{t\in [0,T]} \E_{\Q} \, \|\rho(t)\|_C  <\infty.
\]
A pair of continuous processes $(\rho(t), B(t))$, adapted to a filtration $\FC_t$ of a certain stochastic basis, is referred to as a $C$-{\it weak (probabilistically) solution} to \eqref{Lindstochnorm1} with an initial positive $\rho(0)$ of unit trace,
if $B(t)$ is a BM and $\rho(t)$ satisfies \eqref{Lindstochnorm1} in $\HC^2$, preserves the trace and is uniformly bounded in $\TC_C^S(\HC)$.  

It is well known that solutions of nonlinear equation can be built 
formally from the linear one. Namely, from \eqref{Lindstochmart} we can derive by Ito's formula that
\[
d\frac{1}{{\tr} \, \ga(t)}=-\frac{1}{({\tr} \, \ga(t))^2} \, {\tr} \, (L\ga(t) +\ga(t) L^*) dY_t
+\frac{1}{({\tr} \, \ga(t))^3} [{\tr} \, (L\ga(t)+\ga(t) L^*)]^2 \, dt.
\]
Hence by Ito's product rule we check that the normalised density operator
$\rho(t)=\ga(t)/{\tr} \, \ga(t)$ satisfies the equation
 \[
d\rho=-i[H, \rho(t)] \, dt +\LC_L \rho(t)\, dt
\]
 \begin{equation}
\label{Lindstochnorm}
+(L\rho(t)+\rho(t) L^*-\rho(t)\, {\tr} \, (L\rho(t)+\rho(t) L^*) )
[dY_t-{\tr} \, (L\rho(t)+\rho(t) L^*) dt].
\end{equation}

Therefore, in terms of the {\it innovation process}
 \begin{equation}
\label{outputinnovation}
B(t)=Y(t)-\int_0^t {\tr} \, (L\rho(s)+\rho(s) L^*) \, ds
\end{equation}
 the equation for the inverse trace rewrites as
  \begin{equation}
\label{eqtrinnov}
d\frac{1}{{\tr} \, \ga(t)}
=-\frac{1}{{\tr} \, \ga(t)} \, {\tr} \, (L\rho(t) +\rho(t) L^*) dB(t)
\end{equation}
and the equation for the normalized density operator \eqref{Lindstochnorm} rewrites in the
standard form \eqref{Lindstochnorm1} of the nonlinear filtering equation. Of course, these manipulations being straightforward in finite-dimensional case, may not be well justified in infinite-dimensions. However, the results of  Theorem \ref{LindStochLin1} make them all rigorous for the class of $C$-solutions. Notice that, for positive solutions, ${\tr} \, \ga(t)$ is a positive martingale that specifies the density between the measures $\P$ and $\Q$ that make $Y(t)$ and $B(t)$ Brownian motions, respectively:
\[
\E_{\Q}\xi=\E_{\P}(\xi \, {\tr}\, \ga(t))
\]
for $\FC_t$-adapted $\xi$. And therefore estimate \eqref{Lindstochquadex1}
for the solutions $\ga(t)$ of a linear equation with an initial condition of unit trace rewrites as the estimate
 \begin{equation}
\label{Lindstochquadex11}
\E_Q \|\rho(t)\|_C 
\le e^{\al t} [\|\ga_0\|_C +\al t (1+\be)],
\end{equation}
for the solution $\rho(t)=\ga(t)/{\tr}\, \ga(t)$ of the nonlinear equation.
Therefore, Theorem \ref{LindStochLin1} implies the existence of $C$-weak solutions to nonlinear equation \eqref{Lindstochnorm1}. 

Moreover, let a continuous pair of processes $(\rho(t), B(t))$ be a $C$-weak solution to \eqref{Lindstochnorm1} with a positive initial condition
$\rho_0$ and with ${\tr}\, \rho(t)=1$ for all $t$. Then, 
if the process ${\tr }\, \ga(t)$ is defined as the solution of equation \eqref{eqtrinnov} (with any positive initial condition), the process
$\ga(t)$ is defined as $\ga(t)= \rho(t) {\tr }\, \ga(t)$ and the process
$Y(t)$ is defined via \eqref{outputinnovation}, then these processes satisfy
the linear equation \eqref{Lindstoch}. Therefore, any $C$-weak solution of the nonlinear problem can be obtained by normalization from a $C$-solution to the linear one. Hence the uniqueness for the latter implies the uniqueness (in distribution) for the former.  

Summarizing we can conclude that Theorem \ref{LindStochLin2} implies the following well-posedness result for $C$-weak solutions of nonlinear equation \eqref{Lindstochnorm1}.

\begin{theorem}
\label{LindStochnonLin}
Assume Hypothesis MR and A hold, and let $\rho_0\in \TC_C^{S+}(\HC)$ have unit trace.
Then there exists a unique in law $C$-weak solution of equation \eqref{Lindstochnorm1} in $\HC^2$ with the initial data $\rho_0$.
\end{theorem}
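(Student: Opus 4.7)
The plan is to deduce the theorem directly from Theorem \ref{LindStochLin2} by exploiting the two-way correspondence between $C$-strong solutions of the linear equation \eqref{Lindstoch} and $C$-weak solutions of the nonlinear equation \eqref{Lindstochnorm1}, which was already sketched in the discussion preceding the statement.

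For existence, I would start with the $\HC^2$-valued process $\ga(t)$ produced by Theorem \ref{LindStochLin2} on some basis $(\Om,\FC,\FC_t,\P)$ with BM $Y(t)$ and initial condition $\ga(0)=\rho_0$ (of unit trace). Since $\tr\ga(t)$ is a positive martingale of mean one by \eqref{Lindstochmart}, set $d\Q/d\P|_{\FC_t}=\tr\ga(t)$, define $\rho(t)=\ga(t)/\tr\ga(t)$, and let $B(t)=Y(t)-\int_0^t\tr(L\rho(s)+\rho(s)L^*)\,ds$. Girsanov's theorem promotes $B$ to a BM under $\Q$, while the Ito computation leading to \eqref{eqtrinnov}--\eqref{Lindstochnorm} (now rigorously applicable in $\HC^2$ thanks to the $C$-regularity provided by Theorem \ref{LindStochLin1}) confirms that $\rho$ satisfies \eqref{Lindstochnorm1}. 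The bound \eqref{Lindstochquadex11} follows from \eqref{Lindstochquadex1} together with the identity $\tr\ga(t)\cdot\|\rho(t)\|_C=\|\ga(t)\|_C$, which for positive $\ga$ is immediate from $\tr(C\rho C)\cdot\tr\ga=\tr(C\ga C)$.

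For uniqueness in law I would reverse this construction. Given a $C$-weak solution $(\rho,B)$ on $(\Om,\FC,\FC_t,\Q)$, introduce $\mu(t)$ as the unique solution of the scalar SDE
\[
d\mu(t)=-\mu(t)\,\tr(L\rho(t)+\rho(t)L^*)\,dB(t),\qquad \mu(0)=1,
\]
and, after checking that $\mu$ is a genuine positive martingale, set $d\P/d\Q|_{\FC_t}=\mu(t)$. Girsanov then makes $Y(t)=B(t)+\int_0^t\tr(L\rho+\rho L^*)\,ds$ a BM under $\P$, and Ito's product rule applied to $\ga(t)=\rho(t)/\mu(t)$ recovers the linear equation \eqref{Lindstoch} driven by $Y$ with initial condition $\rho_0$. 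The uniform $C$-bound on $\rho$ together with the integrability of $1/\mu$ ensures that $\ga$ qualifies as a $C$-strong solution of the linear problem. Invoking Theorem \ref{LindStochLin2} (this is where Hypothesis A enters), $\ga$ is uniquely determined on this basis, so $(\rho,B)=(\ga/\tr\ga,\,Y-\int\tr(L\rho+\rho L^*)\,ds)$ is a measurable functional of the standard $\P$-BM $Y$; this pins down the $\P$-law of $(\rho,B)$, and the $\Q$-law is recovered via the density $\tr\ga(t)$, which is itself a functional of $Y$.

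The main obstacle I expect is the promotion of the local martingale $\mu$ to a true martingale of constant expectation one in the unbounded-$L$ regime, where $\tr(L\rho+\rho L^*)$ need not be bounded. The assumption $\sup_{t\le T}\E_\Q\|\rho(t)\|_C<\infty$ built into the definition of $C$-weak solution is tailored for precisely this purpose: combined with the Mora--Rebolledo-type bound estimating $L\rho$ in $\HC^1$ in terms of $\|\rho\|_C$, it supplies the moment control for a localization-plus-Fatou argument entirely analogous to the stopping-time device used in Step 3 of the proof of Theorem \ref{LindStochLin2} (with the stopping times $\tau_n=\inf\{t:\|L\rho(t)\|_{\HC^1}\ge n\}$). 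Once the martingale property is secured, all remaining steps are routine translations of the linear theory through the normalization map $\ga\mapsto\ga/\tr\ga$.
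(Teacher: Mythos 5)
Your proposal is correct and follows essentially the same route as the paper: the theorem is deduced from Theorems \ref{LindStochLin1} and \ref{LindStochLin2} via the two-way correspondence between the linear equation \eqref{Lindstoch} and the normalized equation \eqref{Lindstochnorm1}, using the trace martingale as a Girsanov density in one direction and the inverse-trace SDE \eqref{eqtrinnov} in the other. Your added attention to promoting the local martingale $\mu$ to a true martingale in the unbounded-$L$ regime is a reasonable elaboration of a point the paper handles implicitly by appealing to the $C$-regularity of its solution class.
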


\begin{remark}
Assuming only Hypothesis MR we get, by Theorem \ref{LindStochLin1}, that uniqueness 
holds under additional assumption of respecting finite-dimensional approximations.    
\end{remark} 

To analyze the strong solutions of \eqref{Lindstochnorm1}, we shall use
the key observation made in \cite{K} that one can rewrite stochastic master
 equation for mixed states in the equivalent form that coincides with the corresponding equation for pure states in an appropriately chosen Hilbert space.

The following result is proved in the same way as 
 the corresponding result from \cite{K} devoted to bounded operators $L$. We sketch the proof for completeness.
 
 \begin{theorem}
\label{LindStochnonLin1}
Assume Hypothesis MR and A hold, and let $\rho_0\in \TC_C^{S+}(\HC)$ have unit trace.
Then, for a given BM $B(t)$, there exists a unique $C$-strong solution of equation \eqref{Lindstochnorm1} in $\HC^2$ with the initial data $\rho_0$.
\end{theorem}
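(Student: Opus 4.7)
The plan is to proceed in direct analogy with the proof of Theorem \ref{strongpure} above, but now viewing \eqref{Lindstochnorm1} as a nonlinear stochastic equation in the Hilbert space $\HC^2$ of Hilbert-Schmidt operators; this is the reformulation from \cite{K} under which $\rho(t)\in \HC^2$ plays the role of the pure state $\phi(t)$ and the trace-dependent nonlinearity $\rho\, \tr((L+L^*)\rho)$ in \eqref{Lindstochnorm1} plays the role of the expectation term $\langle L_S\rangle_\phi\, \phi$ in \eqref{eqqufiBnonlinsn}, with the norm $\|\rho\|_C$ on $\TC^S_C(\HC)$ playing the role of the pure state $C$-norm. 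The strong solution will be built as the mean-square limit of finite-dimensional approximations $\rho_m(t)$ solving the analog of \eqref{Lindstochnorm1} with $H,L$ replaced by $H_m=P_mHP_m$ and $L_m=P_mLP_m$. These approximations are well posed on $\HC_m$, preserve positivity and unit trace, and are related to their unnormalized linear counterparts $\ga_m(t)$ by an explicit Girsanov change of measure. Hence the $C$-bounds from Theorem \ref{LindStochLin1} for $\ga_m$ transfer, via Girsanov, to the uniform estimate \eqref{Lindstochquadex11} for $\rho_m$ under $\Q$.

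The second step is to show that $\{\rho_m\}$ is Cauchy. Subtracting the equations for $\rho_k$ and $\rho_m$ and computing $d\, \E_\Q\, \|\rho_k-\rho_m\|_{\HC^2}^2$ produces three types of contributions: (a) terms involving the operator differences $L_k-L_m$ or $H_k-H_m$, which decay like $1/\sqrt{\la_{m-l+1}}$ by the operator estimate \eqref{eqthMor4ope} (this is where Hypothesis A enters); (b) linear-in-difference terms coming from the Lindblad generator $\LC_L$, handled as in the linear case of Theorem \ref{LindStochLin2}; and (c) nonlinear terms produced by the normalization $\rho\, \tr((L+L^*)\rho)$, whose coefficients involve $L\rho$ and are therefore only controlled by $\|\rho\|_C$. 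Introducing the stopping times
\[
\tau_n=\inf\{t\le T: \sup_k \|\rho_k(t)\|_C \ge n\}
\]
absorbs the nonlinear coefficients into constants depending on $n$ and closes Gronwall's inequality, yielding convergence of $\rho_m(t\wedge \tau_n)$ in $L^2(\Q;\HC^2)$. Since each $\rho_m$ lives in a finite-dimensional space and is a.s. continuous in the $\|\cdot\|_C$-topology, the same argument as in Step 3 of Theorem \ref{strongpure} gives $\tau_n\to T$ a.s., hence a $C$-strong solution $\rho(t)$ on $[0,T]$. Uniqueness follows by applying the identical stopping-time device to $\E_\Q\, \|\rho^1-\rho^2\|_{\HC^2}^2$ for two $C$-strong solutions, exactly as in Step 4 of Theorem \ref{strongpure}; here the argument is in fact simpler since $H,L$ are fixed.

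The hard part is precisely the unboundedness of $L$: the trace coefficient $\tr((L+L^*)\rho)$ in the normalization cannot be estimated by $\|\rho\|_{\HC^1}$ alone, but only through the stronger $C$-bound, which itself is only an expectation estimate and not a pathwise one. Consequently the Gronwall argument closes only after localization by the stopping times $\tau_n$, and the verification that $\tau_n\to T$ a.s. uniformly in the approximation index relies crucially on the finite-dimensional continuity of $\|\rho_m(\cdot)\|_C$ combined with the uniform moment bound \eqref{Lindstochquadex11} inherited from the linear theory. This localization is the single point where the unbounded case diverges substantively from the bounded case treated in \cite{K}.
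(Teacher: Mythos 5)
Your plan diverges from the paper's proof in an essential way, and the divergence is where the gap lies. The paper does not redo the finite-dimensional approximation argument at the level of density operators. Instead it purifies: writing $\rho_0=\sum_k p_k e_k\otimes\bar e_k$, it represents the solution as $\ga(t)=\sum_k p_k e_k(t)\otimes\bar e_k(t)$ and observes that the coupled system for the $e_k(t)$ is a single pure-state filtering equation of the form \eqref{eqqufiBlinsB} in the enhanced Hilbert space $l^2_{\HC}(\{p_k\})$, to which Theorems \ref{thMorRebLin} and \ref{strongpure} apply verbatim. The whole point of this detour is that in $l^2_{\HC}(\{p_k\})$ the normalization term ${\tr}\,((L+L^*)\rho)$ becomes the quadratic form $(\e,(L+L^*)\e)/(\e,\e)$, whose increments telescope as $(L\e_1,\e_1-\e_2)+(L(\e_1-\e_2),\e_2)+\cdots$, each summand pairing the difference against a vector $L\e_i$ that is controlled by the $C$-norm; this is exactly what makes the nonlinearity locally Lipschitz on $C$-balls and lets the Gronwall/stopping argument of Theorem \ref{strongpure} close.

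Your direct computation of $d\,\E_\Q\|\rho_k-\rho_m\|^2_{\HC^2}$ loses this structure. The difference of the normalization coefficients is ${\tr}\,\bigl((L+L^*)(\rho_k-\rho_m)\bigr)$, which is linear (not quadratic) in the state, so it does not telescope onto a controlled factor: the functional $\si\mapsto{\tr}\,(L\si)$ is unbounded on $\HC^2$, and on the localized set $\{\|\rho_k\|_C,\|\rho_m\|_C\le n\}$ one only gets a bound of order $n$ (through ${\tr}\,|L\si|$ being controlled by $\|\si\|_C$), not a bound of order $n\,\|\rho_k-\rho_m\|_{\HC^2}$. The same defect affects the cross terms of the Lindblad part, e.g. ${\tr}\,\bigl((\rho_k-\rho_m)L_m(\rho_k-\rho_m)L_m^*\bigr)$. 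Hence the differential inequality you obtain is not of the form $u'\le C(n)\,u+\ep_m$, and Gronwall does not close; localization by $\tau_n$ bounds the coefficients but does not restore Lipschitz dependence on the $\HC^2$-distance. (This is also why the paper's own uniqueness proof for the linear operator equation, Theorem \ref{LindStochLin2}, avoids the $\E\|\cdot\|^2_{\HC^2}$ Gronwall route and uses a Duhamel comparison in trace norm instead.) To repair your argument you would either need a norm in which the trace is a bounded functional while Ito calculus is still available (the paper explains why $\HC^1$ is ruled out), or you should adopt the paper's purification to $l^2_{\HC}(\{p_k\})$.
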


\begin{proof}
According to calculations performed above, $\rho(t)$ solves \eqref{Lindstochnorm1}
 if and only if $\ga(t)=T(t) \rho(t)$ solves the equation
  \begin{equation}
\label{Lindstochnormlinmix}
d\ga(t)=-i[H, \ga(t)] dt +\LC_L \ga(t) dt +(L\ga(t)+\ga(t) L^*)
[dB(t)+ \pi(t) \, dt],
\end{equation}
with
\[
\pi(t)=T^{-1}(t)\, {\tr} \, (L\ga(t)+\ga(t) L^*).
\]

Expanding $\rho_0=\ga_0$ in a series
\[
 \ga_0=\sum_{k=1}^{\infty} p_k e_k\otimes \bar e_k
 \]
 with a non-negative sequence $\{p_k\}$ summing up to one
 and an orthonormal basis $\{e_k\}$,
 we represent $\ga(t)$ as the convergence series of pure states
 \[
 \ga(t)=\sum_{k=1}^{\infty} p_k e_k(t)\otimes \bar e_k(t),
 \]
 with $e_k(t)$ solving the linear filtering equation for
pure states \eqref{eqqufiBlins}:
\begin{equation}
\label{eqinfdim}
de_k(t)=(-iH e_k(t)-\frac12 L^*L e_k(t))\,dt +Le_k(t) [dB(t)+\pi(t)\, dt],
\end{equation}
where
\[
\pi(t)=\frac{\sum_{k=1}^{\infty} p_k (e_k(t), (L+L^*) e_k(t))}{\sum_{k=1}^{\infty} p_k \|e_k(t)\|^2}.
\]

And now we observe that system \eqref{eqinfdim} can be equivalently written as a single SDE
with values in the  Hilbert space $l^2_{\HC}(\{p_k\})$ consisting of infinite sequences
$\e=(e_1, e_2, \cdots )$ of vectors from $\HC$ and equipped
with the norm
\[
\|\e\|^2=\sum_{k=1}^{\infty} p_k (e_k, e_k).
\]
The controlling operator $C$ and other operators involved (like $H$ and $L$)
 extend naturally (acting identically on each coordinate)
to operators in $l^2_{\HC}(\{p_k\})$.
In this notation system \eqref{eqinfdim} writes down as the SDE
\begin{equation}
\label{eqinfdim1}
d \e(t)=(-iH \e(t)-\frac12 L^*L \e(t))\, dt +L\e(t) 
\left[dB(t)+\frac{(\e, (L+L^*) \e)}{(\e, \e)} \, dt\right].
\end{equation}

This equation is the same as \eqref{eqqufiBlinsB} (though written in an enhanced Hilbert space). Hence the well-posedness for this equation follows from the well-posedness of the quantum filtering equation for pure states, that is Theorems \ref{thMorRebLin}
and \ref{strongpure}.
\end{proof}

\end{document}